\documentclass[11pt]{article}

% Document options
%\usepackage{a4wide}
%\usepackage[letterpaper]{geometry}

\usepackage{amssymb}
\usepackage{amstext}
\usepackage{amsfonts}
\usepackage{amsmath}
\usepackage{graphicx}
\usepackage{fullpage}

\setlength{\parindent}{0pt}
\setlength{\parskip}{5pt plus 1pt}

% Theorem environment definitions
\newtheorem{theorem}{Theorem}[section]    % Specify Theorem
\newtheorem{fact}[theorem]{Fact}    % Specify Fact
    % Specify Result
\newtheorem{definition}{Definition}[section] % Specify Definition
   % Specify Conjecture
\newtheorem{claim}[theorem]{Claim}    % Specify Claim
    % Specify Corollary
\newtheorem{lemma}[theorem]{Lemma}    % Specify Lemma
\newcommand{\qed}{\hfill{$\rule{6pt}{6pt}$}} %Box at end of proof
\newenvironment{proof}{\noindent{\bf Proof:}}{\qed\\}
\newenvironment{proofof}[1]{\noindent{\bf Proof of #1:}}{\qed\\}

% equation numbering
\numberwithin{equation}{section} 

% sets

% Math notation

\newcommand{\floor}[1]{{\lfloor #1 \rfloor}}

\newcommand{\defeq}{\stackrel{\mathrm{def}}{=}}

\newcommand{\expct}{{\mathbb E}}

% mnemonics

%misc
\newcommand{\suppress}[1]{}
\newcommand{\comment}[1]{}

% document specific

\newcommand{\eps}{\varepsilon}

\newcommand{\sR}{{{\mathsf R}}}
\newcommand{\sD}{{{\mathsf D}}}
\newcommand{\mcM}{{{\mathcal M}}}
\newcommand{\mcX}{{{\mathcal X}}}
\newcommand{\mcY}{{{\mathcal Y}}}
\newcommand{\mcZ}{{{\mathcal Z}}}
\newcommand{\mcP}{{{\mathcal P}}}
\newcommand{\ment}{{{\mathsf{ment}}}}
\newcommand{\crent}{{{\mathsf{crent}}}}

\newcommand{\err}{{{\mathsf{err}}}}
\newcommand{\pub}{{{\mathsf{pub}}}}
\newcommand{\ess}{{{\mathsf{ess}}}}
\newcommand{\disj}{{{\mathsf{disj}}}}

\title{ {\bf A strong direct product theorem for two-way public coin communication complexity  }}

\author{Rahul Jain\thanks{Centre for Quantum Technologies and Department of Computer Science  National University of Singapore. {\tt rahul@comp.nus.edu.sg}}
}

%\date{Sep 10, 2010}

\begin{document}

\maketitle

\abstract{We show a direct product result for two-way public coin communication complexity of all relations in terms of a new complexity measure that we define. Our new measure is a generalization to non-product distributions of the two-way product subdistribution bound of J., Klauck and Nayak~\cite{JainKN08}, thereby our result implying their direct product result in terms of the two-way product subdistribution bound. 

We show that our new complexity measure gives tight lower bound for the set-disjointness problem, as a result we reproduce strong direct product result for this problem, which was previously shown by Klauck~\cite{Klauck10}.}

\section{Introduction}
\label{sec-introduction}

Let $f \subseteq \mcX \times \mcY \times \mcZ$ be a relation and $\eps >0$. Let Alice with input $x \in \mcX$, and Bob with input $y \in \mcY$, wish to compute a $z \in \mcZ$ such that $(x,y,z) \in f$. We consider the model of public coin two-way communication complexity in which Alice and Bob exchange messages possibly using pubic coins and at the end output $z$. Let $\sR^{2,\pub}_\eps(f)$ denote the communication of the best protocol $\mcP$ which achieves this with error at most $\eps$ (over the public coins) for any input $(x,y)$. Now suppose that Alice and Bob wish to compute $f$ simultaneously on $k$ inputs $(x_1,y_1), \ldots, (x_k, y_k)$ for some $k \geq 1$. They can achieve this by running $k$ independent copies of $\mcP$ in parallel . However in this case the overall success could be as low as $(1 - \eps)^k$. Strong direct product conjecture for $f$ states that this is roughly the best that Alice and Bob can do. We show a direct product result in terms of a new complexity measure, the $\eps$ error two-way conditional relative entropy bound of $f$, denoted $\crent^2_{\eps}(f)$, that we introduce. 
%\begin{theorem}
%\label{thm:main}
%Let $f \subseteq \mcX \times \mcY \times \mcZ$ be a relation. Let $k \geq 1$ be a natural number. Then,
%$$ \sR^{2,\pub}_{1 - 2^{-\Omega(k)}}(f^k)  \geq   \Omega( k \cdot \crent^2_{1/3}(f)  )  \enspace . $$
%\end{theorem}
Our measure $\crent^2_{\eps}(f)$ forms a lower bound on $\sR^{2,\pub}_{\eps}(f) $ and forms an upper bound on the two-way product subdistribution bound of J., Klauck, Nayak~\cite{JainKN08}, thereby implying their direct product result in terms of the two-way product subdistribution bound. 

As an application we reproduce the strong direct product result for the set disjointness problem, first shown by Klauck~\cite{Klauck10}. We show that our new complexity measure gives tight lower bound for the set-disjointness problem. This combined with the direct product in terms of the new complexity measure, implies strong direct product result for the set disjointness problem. 

There has been substantial prior work on the strong direct product question and the weaker direct sum and weak direct product questions in various models of communication complexity, e.g.~\cite{ImpagliazzoRW94, ParnafezRW97,  ChakrabartiSWY01, Shaltiel03, JainRS03a, KlauckSdW04, Klauck04, JainRS05a,  BeamePSW07, Gavinsky08, JainKN08, JainK09,  HarshaJMR09, BarakBCR10, BravermanR10, Klauck10}.

In the next section we provide some information theory and communication complexity preliminaries that we need. We refer the reader to the texts~\cite{CoverT91, KushilevitzN97} for good introductions to these topics respectively. In section~\ref{sec:dpt-2} we introduce our new bound and show the direct product result. In section~\ref{sec:disj} we show the application to set disjointness.

\section{Preliminaries}
\subsection*{Information theory}
\label{sec:inf}
Let $\mcX, \mcY$ be sets and $k$ be a natural number. Let $\mcX^k$ represent $\mcX \times \cdots \times \mcX$, $k$ times. Let $\mu$ be a distribution over $\mcX$ which we denote by $\mu \in \mcX$.  We use $\mu(x)$ to represent the probability of $x$ under $\mu$. The entropy of $\mu$ is defined as $S(\mu) = - \sum_{x \in \mcX} \mu(x) \log \mu(x)$. Let $X$ be a random variable distributed according to $\mu$ which we denote by $X \sim \mu$. We use the same symbol to represent a random variable and its distribution whenever it is clear from the context. For distributions $\mu, \mu_1 \in \mcX$, $\mu \otimes \mu_1$ represents the product distribution $(\mu \otimes \mu_1)(x) = \mu(x) \otimes \mu_1(x)$ and $\mu^k$ represents $\mu \otimes \cdots \otimes \mu$, $k$ times. The $\ell_1$ distance between distributions $\mu, \mu_1$ is defined as $||\mu - \mu_1||_1 = \frac{1}{2} \sum_{x \in \mcX} |\mu(x) - \mu_1(x)|$. Let $\lambda, \mu \in \mcX \times \mcY$.  We use $\mu(x | y)$ to represent $\mu(x,y)/\mu(y)$. When we say $XY \sim \mu$ we assume that $X \in \mcX$  and $Y \in \mcY$. We use $\mu_x$ and $Y_x$ to represent $Y |~ X =x$.  The conditional entropy of $Y$ given $X$, is defined as $S(Y|X) = \expct_{x \leftarrow X} S(Y_x)$. The relative entropy between $\lambda$ and $\mu$ is defined as $S(\lambda || \mu) = \sum_{x \in \mcX} \lambda(x) \log \frac{\lambda(x)}{\mu(x)}$. We use the following properties of relative entropy at many places without explicitly mentioning.
\begin{fact}
\label{fact:relprop}
\begin{enumerate}
\item Relative entropy is jointly convex in its arguments, that is for distributions $\lambda_1, \lambda_2,\mu_1, \mu_2$
$$ S(p\lambda_1 + (1-p)\lambda_2~ ||~ p\mu_1 + (1-p) \mu_2) \leq p\cdot S(\lambda_1 || \mu_1) + (1-p) \cdot S(\lambda_2 || \mu_2) \enspace .$$
\item Let $XY, X^1Y^1 \in \mcX \times \mcY$. Relative entropy satisfies the following chain rule, 
$$S(X Y || X^1 Y^1 ) = S(X || X^1) + \expct_{x \leftarrow X} S(Y_x || Y^1_x) \enspace .$$
This in-particular implies, using joint convexity of relative entropy,
\begin{align*}
S(X Y || X^1 \otimes Y^1 ) & = S(X || X^1)  +  \expct_{x \leftarrow X} S(Y_x || Y^1) \geq S(X || X^1)  + S(Y || Y^1)  \quad .
\end{align*}
\item For distributions $\lambda, \mu$ : $||\lambda - \mu ||_1 \leq \sqrt{S(\lambda || \mu)}$  and $S(\lambda || \mu) \geq 0$.
\end{enumerate}
\end{fact}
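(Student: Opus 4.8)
These are standard facts, and I would obtain all three from the \emph{log-sum inequality}: for nonnegative reals $a_1,\dots,a_n$ and $b_1,\dots,b_n$ one has $\sum_i a_i\log\frac{a_i}{b_i}\ge\bigl(\sum_i a_i\bigr)\log\frac{\sum_i a_i}{\sum_i b_i}$, which is itself Jensen's inequality applied to the convex function $t\mapsto t\log t$ with the weights $b_i/\sum_j b_j$.

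For item~1 I would fix $x\in\mcX$ and apply the log-sum inequality to the two pairs $(p\lambda_1(x),\,p\mu_1(x))$ and $((1-p)\lambda_2(x),\,(1-p)\mu_2(x))$: its left-hand side equals $p\lambda_1(x)\log\frac{\lambda_1(x)}{\mu_1(x)}+(1-p)\lambda_2(x)\log\frac{\lambda_2(x)}{\mu_2(x)}$ and its right-hand side equals $\bigl(p\lambda_1(x)+(1-p)\lambda_2(x)\bigr)\log\frac{p\lambda_1(x)+(1-p)\lambda_2(x)}{p\mu_1(x)+(1-p)\mu_2(x)}$; summing over $x$ gives joint convexity. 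Taking instead $n=\size{\mcX}$, $a_x=\lambda(x)$, $b_x=\mu(x)$ in the log-sum inequality yields $S(\lambda\|\mu)\ge 1\cdot\log 1=0$, which is the nonnegativity claim in item~3.

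For item~2, writing $XY\sim\mu$, $X^1Y^1\sim\nu$ and factoring $\mu(x,y)=\mu(x)\mu(y\mid x)$, $\nu(x,y)=\nu(x)\nu(y\mid x)$, one has $\log\frac{\mu(x,y)}{\nu(x,y)}=\log\frac{\mu(x)}{\nu(x)}+\log\frac{\mu(y\mid x)}{\nu(y\mid x)}$; multiplying by $\mu(x,y)$ and summing, the first term marginalizes to $S(X\|X^1)$ and the second becomes $\expct_{x\leftarrow X}\sum_y\mu(y\mid x)\log\frac{\mu(y\mid x)}{\nu(y\mid x)}=\expct_{x\leftarrow X}S(Y_x\|Y^1_x)$, which is the chain rule. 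The product corollary follows because $X^1Y^1\sim X^1\otimes Y^1$ forces $\nu(y\mid x)=\nu(y)$, i.e.\ $Y^1_x=Y^1$ for all $x$; the chain rule then gives $S(XY\|X^1\otimes Y^1)=S(X\|X^1)+\expct_{x\leftarrow X}S(Y_x\|Y^1)$, and item~1 applied to the convex combination $\expct_{x\leftarrow X}Y_x=Y$ (the $\mcY$-marginal of $\mu$) bounds $\expct_{x\leftarrow X}S(Y_x\|Y^1)\ge S(Y\|Y^1)$.

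The remaining claim of item~3, Pinsker's inequality $\norm{\lambda-\mu}_1\le\sqrt{S(\lambda\|\mu)}$, I would prove by reducing to the binary case. Put $A=\set{x:\lambda(x)\ge\mu(x)}$ and let $\bar\lambda=(\lambda(A),\,1-\lambda(A))$, $\bar\mu=(\mu(A),\,1-\mu(A))$ be the induced two-point distributions. Applying the log-sum inequality separately to the blocks $A$ and $\mcX\setminus A$ (the data-processing inequality for relative entropy) gives $S(\lambda\|\mu)\ge S(\bar\lambda\|\bar\mu)$, while by the choice of $A$ we have $\norm{\lambda-\mu}_1=\lambda(A)-\mu(A)=\norm{\bar\lambda-\bar\mu}_1$. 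It then remains to check the one-variable inequality $p\log\frac pq+(1-p)\log\frac{1-p}{1-q}\ge(p-q)^2$ for $0\le q\le p\le 1$, which I would do by fixing $p$, differentiating the difference of the two sides in $q$, and verifying that it has the correct sign on $q\le p$ with equality at $q=p$. This last calculus step, together with stating the data-processing reduction carefully, is the only part that is not pure bookkeeping; alternatively one may simply cite~\cite{CoverT91} for all of Fact~\ref{fact:relprop}.
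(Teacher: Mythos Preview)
Your argument is correct. All three parts are handled by standard techniques: the log-sum inequality for joint convexity and nonnegativity, a direct factorization for the chain rule, and the usual reduction-to-binary route for Pinsker (your derivative computation indeed gives $f'(q)=(p-q)\bigl(2-\tfrac{1}{q(1-q)\ln 2}\bigr)\le 0$ on $q\le p$ since $q(1-q)\le 1/4$, so the one-variable inequality follows).

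As for comparison with the paper: there is nothing to compare. The paper states Fact~\ref{fact:relprop} without proof, treating it as textbook material (the reader is referred to~\cite{CoverT91} at the start of the preliminaries). Your closing remark ``alternatively one may simply cite~\cite{CoverT91}'' is exactly what the paper does. So your write-up goes strictly beyond the paper by supplying the details it omits; if anything, you could shorten it to that citation and be fully in line with the source.
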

The relative min-entropy between $\lambda$ and $\mu$ is defined as $S_\infty(\lambda || \mu) = \max_{x \in \mcX} \log \frac{\lambda(x)}{\mu(x)}$. It is easily seen that $S(\lambda || \mu) \leq S_\infty(\lambda || \mu)$.
Let $X,Y,Z$ be random variables. The mutual information between $X$ and $Y$ is defined as 
$$I(X : Y) = S(X) + S(Y) - S(XY) = \expct_{x \leftarrow X} S(Y_x || Y) =  \expct_{y \leftarrow Y} S(X_y || X). $$ 
The conditional mutual information is defined as $I(X:Y |~ Z) = \expct_{z \leftarrow Z} I(X : Y |~ Z=z) $. Random variables $XYZ$ form a Markov chain $Z \leftarrow X \leftarrow Y$ iff $I(Y : Z  |~ X =x) = 0$ for each $x$ in the support of $X$.

\subsubsection*{Two-way communication complexity} 
Let $f \subseteq \mcX \times \mcY \times \mcZ$ be a relation. We only consider complete relations, that is for all $(x,y) \in \mcX \times \mcY$, there exists a $z \in \mcZ$ such that $(x,y,z) \in f$. 
In the two-way model of communication, Alice with input $x \in \mcX$ and Bob with input $y \in \mcY$, communicate at the end of which they are supposed to determine an answer $z$ such that $(x,y,z) \in f$. Let~$\eps > 0$ and let $\mu \in \mcX
\times \mcY$ be a distribution. We let $\sD_{\eps}^{2,\mu}(f)$ represent the two-way distributional communication complexity of $f$ under $\mu$
with expected error $\epsilon$, i.e., the communication of the best deterministic two-way protocol for $f$, with distributional
error  (average error over  the inputs) at most
$\eps$ under $\mu$.  Let $\sR^{2,\pub}_{\epsilon}(f)$ represent the public-coin two-way communication complexity of $f$ with worst case error
$\eps$, i.e., the communication of the best public-coin two-way protocol for $f$ with error for each input $(x,y)$ being at
most~$\eps$.  The following is a consequence of the  min-max theorem in game
theory~\cite[Theorem~3.20, page~36]{KushilevitzN97}.
\begin{lemma}[Yao principle]
\label{lem:yao} $\sR^{2,\pub}_{\epsilon}(f) = \max_{\mu}
\sD_{\epsilon}^{2,\mu}(f)$.
\end{lemma}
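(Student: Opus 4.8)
The plan is to derive this from von Neumann's minimax theorem, exactly as the cited reference suggests; the content is entirely in translating between public coins, mixed strategies, and averaging over inputs.

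\textbf{The easy inequality $\max_\mu \sD_\eps^{2,\mu}(f) \le \sR^{2,\pub}_\eps(f)$.} Take an optimal public-coin protocol $\mcP$ for $f$ with worst-case error $\eps$ and communication $c \defeq \sR^{2,\pub}_\eps(f)$. A public-coin protocol is by definition a distribution over deterministic protocols, each of communication at most $c$. Fix any $\mu \in \mcX\times\mcY$. Since $\mcP$ errs with probability at most $\eps$ on every input, its average error under $\mu$ is at most $\eps$; writing this average (by Fubini) as the expectation, over the choice of deterministic protocol, of that protocol's $\mu$-error, we conclude that some deterministic protocol in the support has $\mu$-error at most $\eps$. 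That protocol has communication at most $c$, so $\sD_\eps^{2,\mu}(f) \le c$. As $\mu$ was arbitrary, the inequality follows.

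\textbf{The hard inequality $\sR^{2,\pub}_\eps(f) \le \max_\mu \sD_\eps^{2,\mu}(f)$.} Set $c \defeq \max_\mu \sD_\eps^{2,\mu}(f)$ (on finite input sets the maximum is attained, since there are only finitely many deterministic protocols of communication at most any given bound). Consider the finite two-player zero-sum game in which the minimizer picks a deterministic two-way protocol $P$ of communication at most $c$, the maximizer picks an input $(x,y)\in\mcX\times\mcY$, and the payoff is $1$ if $P$ errs on $(x,y)$ and $0$ otherwise. Let $v$ be the value of this game. A mixed strategy for the minimizer is precisely a public-coin protocol of communication at most $c$, and its worst-case error is $\max_{(x,y)}\expct[\text{payoff}]$; hence there is a public-coin protocol of communication at most $c$ with worst-case error equal to $v$, so $\sR^{2,\pub}_v(f)\le c$.

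\textbf{Closing the loop.} It remains to show $v\le\eps$. By the minimax theorem, $v=\max_\nu \min_P \expct_{(x,y)\sim\nu}[\text{payoff}]$, where $\nu$ ranges over distributions on $\mcX\times\mcY$ and $P$ over deterministic protocols of communication at most $c$. For fixed $\nu$, the inner minimum is exactly the least distributional error under $\nu$ achievable by a deterministic protocol of communication at most $c$; since $\sD_\eps^{2,\nu}(f)\le c$ by the choice of $c$, there is such a protocol with $\nu$-error at most $\eps$, so the inner minimum is at most $\eps$. Taking the max over $\nu$ gives $v\le\eps$, and then $\sR^{2,\pub}_\eps(f)\le \sR^{2,\pub}_v(f)\le c$ by monotonicity of $\sR^{2,\pub}$ in the error parameter. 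Combined with the easy inequality this gives equality. The only genuinely delicate points are the standard finiteness/compactness hypotheses needed to invoke the minimax theorem and to attain the outer maximum; I expect that to be the one place requiring care, and everything else is routine bookkeeping.
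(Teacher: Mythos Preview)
The paper does not actually prove this lemma; it simply cites the min-max theorem from the referenced textbook and moves on. Your argument is correct and is precisely the standard derivation of Yao's principle from von Neumann's minimax theorem that one finds in that reference, so there is nothing to contrast.
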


\section{A strong direct product theorem for two-way communication complexity}
\label{sec:dpt-2}

\subsection{New bounds}
\label{sec:newbounds-2}
Let $f \subseteq \mcX \times \mcY \times \mcZ$ be a relation, $\mu , \lambda \in \mcX \times \mcY$ be distributions and $\eps > 0$. Let $XY \sim \mu$ and $X_1Y_1 \sim \lambda$ be random variables. Let $S \subseteq \mcZ$.

\begin{definition}[Error of a distribution]
Error of  distribution $\mu$ with respect to $f$ and answer in $S$, denoted $\err_{f,S}(\mu)$, is defined as 
 $$\err_{f,S}(\mu) \defeq \min\{ \Pr_{(x,y) \leftarrow \mu}[(x,y,z) \notin f] ~|~ z \in S\} \enspace .$$
\end{definition}

\begin{definition}[Essentialness of an answer subset] Essentialness of answer in $S$ for $f$ with respect to distribution $\mu$, denoted $\ess^\mu(f,S)$, is defined as 
$$ \ess^\mu(f,S) \defeq 1 - \Pr_{(x,y) \leftarrow \mu} [\mbox{there exists $z \notin S$ such that $(x,y,z) \in f$}]  .$$
\end{definition}
For example $\ess^\mu(f,\mcZ) = 1$.
\begin{definition}[One-way distributions]
$\lambda$  is called one-way for $\mu$  with respect to $\mcX$, if for all $(x,y)$ in the support of $\lambda$ we have $\mu(y | x) = \lambda(y | x)$.
Similarly $\lambda$  is called one-way for $\mu$  with respect to $\mcY$, if for all $(x,y)$ in the support of $\lambda$ we have  $\mu(x | y) = \lambda(x | y)$.
\end{definition}

\begin{definition}[SM-like] 
$\lambda$ is  called SM-like (simultaneous-message-like) for $\mu$, if there is a
    distribution~$\theta$ on~$\mcX \times \mcY$ such that~$\theta$ is
    one-way for $\mu$ with respect to~$\mcX$ and~$\lambda$ is
    one-way for~$\theta$ with respect to~$\mcY$.
\end{definition}

\begin{definition}[Conditional relative entropy]
The $\mcY$-conditional relative entropy of $\lambda$ with respect to $\mu$, denoted $\crent^\mu_{\mcY}(\lambda)$, is defined as
$$ \crent^\mu_{\mcY} (\lambda) \defeq \expct_{y \leftarrow Y_1} S( (X_1)_y || X_y ) . $$
Similarly the $\mcX$-conditional relative entropy of $\lambda$ with respect to $\mu$, denoted $\crent^\mu_{\mcX}(\lambda)$, is defined as
$$ \crent^\mu_{\mcX} (\lambda) \defeq \expct_{x \leftarrow X_1} S( (Y_1)_x || Y_x) . $$
\end{definition}

\begin{definition}[Conditional relative entropy bound]
The  two-way $\eps$-error conditional relative entropy bound of $f$  with answer in $S$ with respect to distribution $\mu$, denoted $\crent^{2,\mu}_\eps(f,S)$, is defined as 
$$ \crent^{2,\mu}_\eps(f,S) \defeq \min \{ \crent^\mu_\mcX(\lambda) + \crent^\mu_\mcY(\lambda) ~ |~ \lambda \mbox{ is SM-like for $\mu$ and $\err_{f,S}(\lambda) \leq \eps$} \} \enspace. $$ 
The  two-way $\eps$-error conditional relative entropy bound of $f$, denoted $\crent^2(f)$, is defined as
$$ \crent^2_\eps(f) \defeq \max \{ \ess^\mu(f,S) \cdot \crent^{2,\mu}_\eps(f,S) ~|~ \mu \mbox{ is a distribution over } \mcX \times \mcY \mbox{ and } S \subseteq \mcZ\} \enspace. $$ 
\end{definition}

The following bound is analogous to a bound defined in~\cite{JainKN08} where it was referred to as the two-way subdistribution bound. We call it differently here for consistency of nomenclature with the other bounds. \cite{JainKN08} typically considered the cases where $S = \mcZ$ or $S$ is a singleton set.
\begin{definition}[Relative min entropy bound]
\label{def:relminbd}
The two-way $\eps$-error relative min entropy bound of $f$  with answer in $S$ with respect to distribution $\mu$, denoted $\ment^{2,\mu}_\eps(f,S)$, is defined as 
$$ \ment^{2,\mu}_\eps(f,S) \defeq \min \{ S_\infty(\lambda || \mu) |~ \lambda \mbox{ is SM-like for $\mu$ and $\err_{f,S}(\lambda) \leq \eps$} \} \enspace. $$ 
The  two-way $\eps$-error relative min entropy bound of $f$, denoted $\ment^2_\eps(f)$, is defined as
$$ \ment^2_\eps(f) \defeq \max \{ \ess^\mu(f,S) \cdot \ment^{2,\mu}_\eps(f,S) ~|~ \mu \mbox{ is a distribution over } \mcX \times \mcY \mbox{ and } S \subseteq \mcZ\} \enspace. $$ 
\end{definition}
The following is easily seen from definitions.
\begin{lemma} 
\label{lem:relations}
$$\crent^\mu_\mcX(\lambda) + \crent^\mu_\mcX(\lambda) \leq 2 \cdot S_\infty(\lambda || \mu)$$
 and hence 
$$\crent^{2,\mu}_\eps(f,S) \leq 2 \cdot  \ment^{2,\mu}_\eps(f,S) \quad \mbox{ and } \quad \crent^2_\eps(f) \leq 2 \cdot  \ment^2_\eps(f) .$$
\end{lemma}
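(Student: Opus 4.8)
The plan is to first establish the pointwise inequality $\crent^\mu_\mcX(\lambda) + \crent^\mu_\mcY(\lambda) \leq 2\,S_\infty(\lambda \| \mu)$ (the left side of the displayed statement, read with the evident correction of the $\mcX$/$\mcY$ typo), and then lift it to the two complexity-measure inequalities by routine feasibility arguments.

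For the pointwise bound, write $XY \sim \mu$ and $X_1 Y_1 \sim \lambda$. Unfolding the definition of $\crent^\mu_\mcY(\lambda)$ and using the paper's convention $\lambda(x|y) = \lambda(x,y)/\lambda(y)$,
\[
\crent^\mu_\mcY(\lambda) = \expct_{y \leftarrow Y_1} S\bigl((X_1)_y \,\|\, X_y\bigr) = \sum_{(x,y)} \lambda(x,y)\, \log \frac{\lambda(x|y)}{\mu(x|y)} .
\]
The key observation is the identity $\dfrac{\lambda(x|y)}{\mu(x|y)} = \dfrac{\lambda(x,y)}{\mu(x,y)} \cdot \dfrac{\mu(y)}{\lambda(y)}$, valid on the support of $\lambda$ (which is exactly where the expectation over $Y_1$ places mass, so the factor $\mu(y)/\lambda(y)$ is well defined). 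Bounding $\log\frac{\lambda(x,y)}{\mu(x,y)} \leq S_\infty(\lambda \| \mu)$ termwise gives
\[
\crent^\mu_\mcY(\lambda) \;\leq\; S_\infty(\lambda \| \mu) + \sum_{(x,y)} \lambda(x,y)\, \log \frac{\mu(y)}{\lambda(y)} \;=\; S_\infty(\lambda\|\mu) - S(\lambda_\mcY \| \mu_\mcY) \;\leq\; S_\infty(\lambda\|\mu) ,
\]
where $\lambda_\mcY,\mu_\mcY$ denote the marginals on $\mcY$ and the last step uses nonnegativity of relative entropy (Fact~\ref{fact:relprop}). The symmetric argument gives $\crent^\mu_\mcX(\lambda) \leq S_\infty(\lambda\|\mu)$, and adding the two yields the claim. (One actually obtains the slightly stronger $\crent^\mu_\mcX(\lambda)+\crent^\mu_\mcY(\lambda) \leq 2S_\infty(\lambda\|\mu) - S(\lambda_\mcX\|\mu_\mcX) - S(\lambda_\mcY\|\mu_\mcY)$, but the stated form is all that is needed.)

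For the derived inequalities, fix a distribution $\mu$ and a set $S \subseteq \mcZ$, and let $\lambda^\ast$ be an optimal distribution for $\ment^{2,\mu}_\eps(f,S)$, so that $\lambda^\ast$ is SM-like for $\mu$, $\err_{f,S}(\lambda^\ast) \leq \eps$, and $S_\infty(\lambda^\ast\|\mu) = \ment^{2,\mu}_\eps(f,S)$. Then $\lambda^\ast$ is a feasible point of the minimization defining $\crent^{2,\mu}_\eps(f,S)$, so by the pointwise bound $\crent^{2,\mu}_\eps(f,S) \leq \crent^\mu_\mcX(\lambda^\ast) + \crent^\mu_\mcY(\lambda^\ast) \leq 2\,S_\infty(\lambda^\ast\|\mu) = 2\,\ment^{2,\mu}_\eps(f,S)$. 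Multiplying both sides by $\ess^\mu(f,S) \geq 0$ and taking the maximum over all $\mu$ and all $S \subseteq \mcZ$ gives $\crent^2_\eps(f) \leq 2\,\ment^2_\eps(f)$.

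There is no genuine obstacle here; the statement is essentially bookkeeping with the chain rule. The only points that require a little care are the support condition in the conditional-to-joint identity and the recognition that discarding the nonnegative marginal relative-entropy term $S(\lambda_\mcY\|\mu_\mcY)$ (resp.\ $S(\lambda_\mcX\|\mu_\mcX)$) is precisely what converts the exact identity into the desired inequality.
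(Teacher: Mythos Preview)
Your proof is correct and supplies exactly the routine details the paper omits (the paper states only that the lemma ``is easily seen from definitions'' and gives no argument). The decomposition $\log\frac{\lambda(x|y)}{\mu(x|y)} = \log\frac{\lambda(x,y)}{\mu(x,y)} - \log\frac{\lambda(y)}{\mu(y)}$ followed by the termwise bound and nonnegativity of $S(\lambda_\mcY\|\mu_\mcY)$ is the natural way to unfold the definitions, and the feasibility lift to the complexity measures is straightforward.
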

It can be argued using the substate theorem~\cite{JainRS02} (proof skipped) that when $\mu$ is a product distribution then  $\ment^{2,\mu}_\eps(f,S) =  O(\crent^{2,\mu}_{\eps/2}(f,S))$. Hence our bound $\crent^{2}_\eps(f)$ is an upper bound on the product subdistribution bound of~\cite{JainKN08} (which is obtained when in Definition~\ref{def:relminbd} maximization is done only over product distributions $\mu$).

\subsection{Strong direct product}
\label{sec:dpt}

\noindent {\bf Notation:} Let $B$ be a set. For a random variable distributed in $B^k$, or a string in $B^k$, the portion corresponding to the $i$th coordinate is represented with subscript $i$. Also the portion except the $i$th coordinate is represented with subscript $-i$. Similarly portion corresponding to a subset $C \subseteq [k]$ is represented with subscript $C$. For joint random variables $MN$, we let $M_n$ to represent $M |~ (N=n)$ and also $M N |~ (N=n)$ and is clear from the context.

We start with the following theorem which we prove later. 
\begin{theorem}[Direct product in terms of $\ment$ and $\crent$]
\label{thm:dptment-2}
Let $f \subseteq \mcX \times \mcY \times \mcZ$ be a relation, $\mu \in \mcX \times \mcY$ be a distribution and $S \subseteq \mcZ$. Let $0 < \eps < 1/3$, $0 < 200 \delta < 1$ and $k$ be a natural number.  
Fix $z \in \mcZ^k$. Let the number of indices $i \in [k]$ with $z_{i} \in S$ be at least $\delta_1 k$ .  Then
$$ \ment^{2,\mu^k}_{1 - (1- \eps/2)^{\floor {\delta \delta_1 k}}}(f^k,\{z\})  \geq \delta \cdot \delta_1 \cdot k \cdot \crent^{2,\mu}_\eps(f,S) \enspace .$$
\end{theorem}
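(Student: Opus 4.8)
The plan is to unwind the two definitions and reduce the statement to a single-copy extraction claim. By the definition of $\ment^{2,\mu^k}_{\eps'}(f^k,\{z\})$ with $\eps' = 1-(1-\eps/2)^{\floor{\delta\delta_1 k}}$, it suffices to show that every distribution $\lambda$ on $(\mcX\times\mcY)^k$ that is SM-like for $\mu^k$ and satisfies $\err_{f^k,\{z\}}(\lambda)\le\eps'$ has $S_\infty(\lambda\|\mu^k)\ge\delta\delta_1 k\cdot\crent^{2,\mu}_\eps(f,S)$. So I would fix such a $\lambda$, let $\theta$ be its SM-witness, and write $c := S_\infty(\lambda\|\mu^k)$, $m := \floor{\delta\delta_1 k}$, and $G := \{i\in[k]\colon z_i\in S\}$; then $|G|\ge\delta_1 k$ and the error hypothesis reads $\Pr_{\lambda}\!\left[\forall i\in[k]\ (X_i,Y_i,z_i)\in f\right]\ge(1-\eps/2)^{m}$. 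The target is to locate, inside $\lambda$, more than $\delta\delta_1 k$ distributions on $\mcX\times\mcY$, each SM-like for $\mu$ and each with $\err_{f,S}(\cdot)\le\eps$, whose $\crent^\mu_\mcX+\crent^\mu_\mcY$ values add up to (essentially) at most $c$; since every distribution feasible for $\crent^{2,\mu}_\eps(f,S)$ has $\crent^\mu_\mcX+\crent^\mu_\mcY\ge\crent^{2,\mu}_\eps(f,S)$ by definition, the theorem follows.

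Two structural facts drive the extraction. \textbf{(i) Localisation of SM-likeness.} If $\lambda$ is SM-like for $\mu^k$ via $\theta$, then $\theta(x,y) = \theta_X(x)\prod_j\mu(y_j|x_j)$ and $\lambda(x|y)=\theta(x|y)$; a direct computation then shows that for any $i$ and any value $x_{-i}$, conditioning $\lambda$ on $X_{-i}=x_{-i}$ together with \emph{any} event on $Y_{-i}$ and taking the coordinate-$i$ marginal yields a distribution that is SM-like for $\mu$, with witness $(x_i,y_i)\mapsto\theta_X(x_i\mid x_{-i})\,\mu(y_i\mid x_i)$. \textbf{(ii) Additive cost decomposition.} The two one-way relations give the pointwise identity $\log\frac{\lambda(x,y)}{\mu^k(x,y)}=\log\frac{\theta_X(x)}{\mu^k_X(x)}+\log\frac{\lambda_Y(y)}{\theta_Y(y)}$, so the min-entropy bound $c$ splits into an $x$-budget $\log\frac{\theta_X(x)}{\mu^k_X(x)}$ and a $y$-budget $\log\frac{\lambda_Y(y)}{\theta_Y(y)}$ that are jointly at most $c$ on the support; moreover the $\crent^\mu_\mcY$ contribution of an extracted single-copy distribution is controlled by the $x$-part (it reflects the gap between $\theta_X$ and $\mu_X^k$) and its $\crent^\mu_\mcX$ contribution by the $y$-part (it reflects the reweighting $\lambda_Y/\theta_Y$). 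Charging the two $\crent$ pieces to these two \emph{separate} halves of the budget — rather than both to $S(\lambda\|\mu^k)$ — is what lets the per-coordinate costs sum against $c$ rather than against $2c$.

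The extraction itself proceeds greedily over coordinates of $G$. Maintaining a conditioning event $\mathcal{E}$ that records correctness of $f$ on the coordinates chosen so far (initially trivial, so $\Pr_\lambda[\mathcal{E}]\ge(1-\eps/2)^m$ throughout, since $\mathcal{E}\supseteq\{\text{all }k\text{ correct}\}$ always), at each round I would, among the $\ge\delta_1 k-m\ge\frac{199}{200}\delta_1 k$ still-available good coordinates, use an averaging argument over the coordinate and over the value of the other $X$-coordinates — with respect to $\lambda$ conditioned on $\mathcal{E}$ — combined with a Markov step to pick a coordinate $i$ whose associated single-copy distribution, SM-like for $\mu$ with answer $z_i\in S$ by fact (i), \emph{simultaneously} has (a) error at most $\eps$ against $z_i$ and (b) $\crent^\mu_\mcX+\crent^\mu_\mcY$ inside the appropriate slice of the budget; then I would add $(X_i,Y_i,z_i)\in f$ to $\mathcal{E}$. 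By fact (ii) and the chain rule for relative entropy (Fact~\ref{fact:relprop}), the conditional relative entropy $S(\lambda|_\mathcal{E}\|\mu^k)\le c+\log\frac{1}{\Pr[\mathcal{E}]}\le c+m\log\frac{1}{1-\eps/2}$ dominates the running sum of the extracted costs; the hypothesis $\eps<1/3$ makes the $m\log\frac{1}{1-\eps/2}$ slack small, and $200\delta<1$ keeps enough good coordinates available, so the procedure runs for more than $\delta\delta_1 k$ rounds, and the extracted costs sum to at most $c$ (after absorbing the slack into the hypotheses), as required.

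The main obstacle is running the error accounting and the budget accounting against one another. Conditioning on a coordinate's correctness both consumes relative-entropy budget (so only about $\delta\delta_1 k$ rounds are affordable, not $\delta_1 k$) and risks driving the surviving all-correct probability too low, after which the averaging step need no longer produce a low-error coordinate; the precise hypotheses ($0<\eps<1/3$, $0<200\delta<1$, the floor in $\floor{\delta\delta_1 k}$, and the $\eps$-versus-$\eps/2$ gap between the error of the single-copy distributions and the error of $f^k$) are exactly the slack needed to keep both invariants alive for the full run. Making the per-round cost bound tight enough that no spurious constant factor enters — which is where the decomposition of fact (ii) is indispensable — is, I expect, the substantive part of the argument, rather than either structural fact on its own.
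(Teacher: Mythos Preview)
Your high-level architecture matches the paper's: unwind the definition, fix $\lambda$ SM-like for $\mu^k$ with $S_\infty(\lambda\|\mu^k)<\delta\delta_1 k\cdot c$, and greedily locate $k'=\floor{\delta\delta_1 k}$ coordinates in $G$ so that, conditionally on success in the earlier ones, the next one succeeds with probability at most $1-\eps/2$, by exhibiting at each step a single-copy distribution that is SM-like for $\mu$, has error at most $\eps$ against $z_i\in S$, and has $\crent^\mu_\mcX+\crent^\mu_\mcY\le c$. Your Facts~(i) and~(ii) are both correct as stated.

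The gap is in how you obtain a \emph{single} conditioning under which both $\crent^\mu_\mcX$ and $\crent^\mu_\mcY$ are simultaneously small. Your scheme conditions on $X_{-i}=x_{-i}$ (together with $\mathcal E$) and averages over $x_{-i}$ and $i$. That averaging, via the chain rule, does control $\sum_i\expct_{x_{-i}}\crent^\mu_\mcX(\lambda'_{i,x_{-i}})$, since this sum equals $\expct_{x}\sum_i S\!\left(Y_i\mid x,\mathcal E\;\big\|\;\mu(Y_i\mid x_i)\right)\le S(\lambda|_{\mathcal E}\,\|\,\mu^k)$. But the companion term $\crent^\mu_\mcY(\lambda'_{i,x_{-i}})$ is, exactly as your own Fact~(ii) shows, governed by $\theta_X(\,\cdot\mid x_{-i})$, and nothing in the hypotheses bounds $\sum_i\expct_{x_{-i}} S\!\left(\theta_X(X_i\mid x_{-i})\,\|\,\mu_X\right)$: the pointwise inequality $\log\tfrac{\theta_X(x)}{\mu_X^k(x)}+\log\tfrac{\lambda_Y(y)}{\theta_Y(y)}\le c$ does not control the first summand alone, and $\theta_X$ may carry arbitrarily large cross-coordinate correlation without violating $S_\infty(\lambda\|\mu^k)\le c$. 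So an asymmetric conditioning on $X_{-i}$ cannot deliver both $\crent$ bounds; to get $\crent^\mu_\mcY$ you would need the symmetric conditioning on $Y_{-i}$, and then there is no reason the two ``good'' choices of $(i,x_{-i})$ and $(i,y_{-i})$ coincide.

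The paper's device for this is a \emph{random-side-revealing} trick that your proposal does not mention. One takes $D\in\{0,1\}^k$ uniform and independent of $X^1Y^1=(XY\mid T=1)$, sets $U_l=X^1_l$ if $D_l=0$ and $U_l=Y^1_l$ if $D_l=1$, and conditions on $(D_{-j},U_{-j},X^1_C,Y^1_C)$ for a candidate coordinate $j$. For each $l\notin C\cup\{j\}$ this reveals exactly one side, so the coordinate-$j$ marginal remains SM-like for $\mu$ (your Fact~(i) style computation applies). Crucially, since $D_j$ is a fair coin independent of the rest, averaging the relative-entropy bound over $D_j=1$ yields $\expct_{y_j}S\!\left(X^1_j\mid\cdots\;\big\|\;\mu(X\mid y_j)\right)$, i.e.\ the $\crent^\mu_\mcY$ term, while $D_j=0$ yields the $\crent^\mu_\mcX$ term --- for the \emph{same} fixed value of $(d_{-j},u_{-j},x_C,y_C)$. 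A single Markov step over $(d_{-j},u_{-j},x_C,y_C)$ then produces one conditioning under which both are $O(\delta c)$, and hence $\crent^\mu_\mcX+\crent^\mu_\mcY\le c$. This symmetrisation is the substantive idea your plan is missing; Fact~(ii) is a true identity but does not substitute for it.
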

We now state and prove our main result.
\begin{theorem}[Direct product in terms of $\sD$ and $\crent$]
\label{thm:dpttwoway}
Let $f \subseteq \mcX \times \mcY \times \mcZ$ be a relation, $\mu \in \mcX \times \mcY$ be a distribution and $S \subseteq \mcZ$.  Let $0 < \eps < 1/3$  and $k$ be a natural number.    Let $\delta_2 = \ess^\mu(f,S)$. Let $0 < 200 \delta < \delta_2 $. 
Let $\delta' = 3(1- \eps/2)^{\floor {\delta \delta_2 k/2}} $. Then,
$$  \sD^{2,\mu^k}_{1 - \delta'}(f^k)   \geq \delta \cdot \delta_2 \cdot k \cdot \crent^{2,\mu}_\eps(f,S) - k\enspace .$$
%In other words, by maximizing over $\mu, S$ and using Lemma~\ref{lem:yao}.
%$$ \sR^{2,\pub}_{1 - 2^{-\Omega(k)}}(f^k)  \geq   \Omega(k \cdot \crent^{2}_{1/3}(f)   )  \enspace . $$
\end{theorem}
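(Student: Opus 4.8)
The plan is to start from an (almost) optimal deterministic protocol and peel off enough structure to feed it into Theorem~\ref{thm:dptment-2}. Put $c=\sD^{2,\mu^k}_{1-\delta'}(f^k)$ and fix a deterministic two-way protocol $\mcP$ of cost $c$ whose distributional error under $\mu^k$ is at most $1-\delta'$, i.e.\ which computes $f^k$ with probability at least $\delta'$; we must show $c\geq\delta\delta_2 k\,\crent^{2,\mu}_\eps(f,S)-k$. We use the standard fact that $\mcP$ partitions $\mcX^k\times\mcY^k$ into at most $2^c$ combinatorial rectangles $R_m=A_m\times B_m$ (its leaves), outputting a fixed string $z(m)\in\mcZ^k$ on each $R_m$.

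Call a coordinate $i$ essential for $(x,y)$ if $(x_i,y_i)$ has no valid $f$-answer outside $S$. By the definition of $\ess$, under $\mu^k$ the coordinates are essential independently, each with probability $\delta_2$, so a Chernoff bound makes the number of essential coordinates at least $\delta_2 k/2$ except with probability far below $(1-\eps/2)^{\floor{\delta\delta_2 k/2}}$ (the hypotheses $200\delta<\delta_2$ and $\eps<1/3$ leave ample room). Whenever $\mcP$ succeeds, every output coordinate is a valid answer and hence lies in $S$ on the essential coordinates; so on the event $\mathcal{E}=\{\mcP\text{ succeeds and there are}\geq\delta_2 k/2\text{ essential coordinates}\}$ the output has at least $\delta_2 k/2$ coordinates in $S$. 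Therefore $\mathcal{E}$ is covered by the leaves $m$ with $|\{i:z(m)_i\in S\}|\geq\delta_2 k/2$, and, using $\delta'=3(1-\eps/2)^{\floor{\delta\delta_2 k/2}}$, we have $\Pr_{\mu^k}[\mathcal{E}]\geq\delta'-(\text{Chernoff tail})\geq 2(1-\eps/2)^{\floor{\delta\delta_2 k/2}}$.

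Now I link leaves to the relative min entropy bound. For a leaf $R_m=A_m\times B_m$, the distribution $\nu_m$ obtained by conditioning $\mu^k$ on $R_m$ is SM-like for $\mu^k$: taking $\theta$ to be $\mu^k$ conditioned on $A_m\times\mcY^k$, $\theta$ is one-way for $\mu^k$ with respect to $\mcX$ because conditioning only on the $\mcX^k$-side does not alter the conditionals $\mu^k(\vec y\mid\vec x)$, and one checks directly (here the rectangle shape of $R_m$ is used) that $\nu_m$ is one-way for $\theta$ with respect to $\mcY$. Moreover $S_\infty(\nu_m\,\|\,\mu^k)=\log(1/\mu^k(R_m))$ and $\err_{f^k,\{z(m)\}}(\nu_m)=\Pr[\mcP\text{ fails}\mid R_m]$. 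Next discard the ``small'' leaves with $\mu^k(R_m)\leq 2^{-(c+k)}$: being at most $2^c$ in number, their total mass is at most $2^{-k}$, which the parameter assumptions keep below $(1-\eps/2)^{\floor{\delta\delta_2 k/2}}$; hence the part of $\mathcal{E}$ lying on the surviving (``big'') leaves still has mass at least $(1-\eps/2)^{\floor{\delta\delta_2 k/2}}$, and each big leaf satisfies both $S_\infty(\nu_m\,\|\,\mu^k)<c+k$ and $|\{i:z(m)_i\in S\}|\geq\delta_2 k/2$. This $2^{-(c+k)}$ cutoff is precisely the source of the additive $-k$.

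Finally, apply Theorem~\ref{thm:dptment-2} to each big leaf with $z=z(m)$, taking the role of its parameter $\delta_1$ to be $\delta_2/2$ and the role of its parameter $\delta$ to be $2\delta$ (permissible since $200\delta<\delta_2\leq 1$ leaves the required slack): this yields $\ment^{2,\mu^k}_{1-(1-\eps/2)^{\floor{\delta\delta_2 k}}}(f^k,\{z(m)\})\geq 2\delta\cdot\tfrac{\delta_2}{2}\cdot k\cdot\crent^{2,\mu}_\eps(f,S)=\delta\delta_2 k\,\crent^{2,\mu}_\eps(f,S)$. Suppose for contradiction that $c+k<\delta\delta_2 k\,\crent^{2,\mu}_\eps(f,S)$. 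Then for every big leaf the distribution $\nu_m$, which is SM-like with $S_\infty(\nu_m\,\|\,\mu^k)<c+k$ strictly below this $\ment$-value, cannot have error at most the quoted threshold; hence $\err_{f^k,\{z(m)\}}(\nu_m)>1-(1-\eps/2)^{\floor{\delta\delta_2 k}}$, i.e.\ $\Pr[\mcP\text{ succeeds}\mid R_m]<(1-\eps/2)^{\floor{\delta\delta_2 k}}\leq(1-\eps/2)^{\floor{\delta\delta_2 k/2}}$. Averaging over the big leaves (whose masses sum to at most $1$) then forces the part of $\mathcal{E}$ on big leaves to have mass $<(1-\eps/2)^{\floor{\delta\delta_2 k/2}}$, contradicting the previous paragraph; so $c+k\geq\delta\delta_2 k\,\crent^{2,\mu}_\eps(f,S)$, as claimed. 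The two points I expect to require the most care are the verification that $\nu_m$ is SM-like for $\mu^k$ — this is the actual bridge from communication protocols to the information-theoretic quantity — and the bookkeeping that keeps the Chernoff tail, the $2^{-k}$ small-leaf loss, and the $\ment$-error threshold simultaneously below a common fraction of $(1-\eps/2)^{\floor{\delta\delta_2 k/2}}$, so that the closing counting inequality is strict.
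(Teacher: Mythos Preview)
Your proof is correct and follows essentially the same approach as the paper's: use a Chernoff bound to guarantee at least $\delta_2 k/2$ essential coordinates on a typical input, discard low-probability rectangles (this $2^{-(c+k)}$ cutoff is exactly the source of the additive $-k$, as you note), and then feed each surviving rectangle into Theorem~\ref{thm:dptment-2} with $\delta_1=\delta_2/2$. One small slip in wording: it is not true that \emph{every} big leaf satisfies $|\{i:z(m)_i\in S\}|\geq\delta_2 k/2$, only those big leaves that meet $\mathcal{E}$ do --- but your closing averaging argument only sums over leaves meeting $\mathcal{E}$, so the proof is unaffected.
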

\begin{proof}
Let $\crent_{2,\eps}^{\mu}(f,S) = c$. For input $(x,y) \in \mcX^k \times \mcY^k$, let $b(x,y)$ be the number of indices $i$ in $[k]$ for which  there exists $z_i \notin S$ such that $(x_i,y_i,z_i) \in f$. Let 
$$B = \{(x,y) \in \mcX^k \times \mcY^k |~ b(x,y) \geq (1 - \delta_2/2) k  \}. $$
By Chernoff's inequality we get,
$$ \Pr_{(x,y) \leftarrow \mu^k}[(x,y) \in B] \leq \exp(-\delta_2^2 k/2) .$$
Let $\mcP$ be a protocol for $f^k$ with inputs $XY \sim \mu^k$ with communication at most $d = (k c \delta \delta_2/2) - k$ bits.  Let $M \in \mcM$ represent the message transcript of $\mcP$.  Let 
$$B_M = \{ m \in \mcM  | ~ \Pr[(XY)_m \in B] \geq \exp(-\delta_2^2 k/4) \} .$$ Then $ \Pr[M \in B_M] \leq \exp(-\delta_2^2 k/4) .$ Let 
$$B^1_M = \{ m \in \mcM  | ~ \Pr[M = m] \leq 2^{-d - k} \} .$$ Then 
$ \Pr[M \in B^1_M] \leq 2^{- k} .$ Fix $ m \notin B_M \cup B^1_M$. Let $z_m$ be the output of $\mcP$ when $M=m$. Let $b(z_m)$ be the number of indices $i$ such that $z_{m,i} \notin S$. If $b(z_m) \geq 1 -\delta_2 k/2$ then success of $\mcP$ when $M=m$ is at most $\exp(-\delta_2^2 k/4) \leq (1- \eps/2)^{\floor {\delta \delta_2 k/2}}$. If $b(z_m) < 1 -\delta_2 k/2$ then from Theorem~\ref{thm:dptment-2} (by setting $z = z_m$ and $\delta_1 = \delta_2/2$), success of $\mcP$ when $M=m$ is at most $(1- \eps/2)^{\floor {\delta \delta_2 k/2}}$. Therefore overall success of $\mcP$ is at most
\begin{align*}
\delta' & = 2^{- k} +  \exp(-\delta_2^2 k/4)  + (1-2^{-k} -  \exp(-\delta_2^2 k/4) (1- \eps/2)^{\floor {\delta \delta_2 k/2}} \\
& \leq  3(1- \eps/2)^{\floor {\delta \delta_2 k/2}} .
\end{align*}

\end{proof}

\begin{proofof}{Theorem~\ref{thm:dptment-2}}
Let $c = \crent^{2,\mu}_\eps(f,S) $. Let $\lambda \in \mcX^k \times \mcY^k$ be a distribution which is SM-like for $\mu^k$ and with $S_\infty (\lambda || \mu^k) < \delta \delta_1 c k$. We show that $\err_{f^k, \{z\}}(\lambda) \geq 1 - (1 - \eps/2)^{\floor{ \delta \delta_1 k  }}$. This shows the desired. 

Let $X Y \sim \lambda$.  For a coordinate $i$, let the binary random variable $T_i \in \{0,1\}$, correlated with $XY$, denote success in the $i$th coordinate.  That is $T_i = 1$ iff $XY = (x,y)$ such that $(x_i, y_i, z_i) \in f$. 
We make the following claim which we prove later. Let $k' = \floor {\delta \delta_1 k  }$.
\begin{claim}
\label{claim:succ}
There exists $k'$ distinct coordinates $i_1, \ldots ,i_{k'}$  such that $\Pr[T_{i_1} = 1] \leq 1 - \eps/2$ and for each  $r < k' $,
\begin{enumerate}
\item either $\Pr[T_{i_1} \times  T_{i_2} \times \cdots \times T_{i_r} = 1 ] \leq (1 - \eps/2)^{k'}$,
\item or  $\Pr[T_{i_{r+1}} = 1 |~ (T_{i_1} \times  T_{i_2} \times \cdots \times T_{i_r} = 1)] \leq 1 - \eps/2 $. 
\end{enumerate}
\end{claim}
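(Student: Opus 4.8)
Throughout write $c=\crent^{2,\mu}_\eps(f,S)$, and recall we are given $XY\sim\lambda$ with $\lambda$ SM-like for $\mu^k$ and $S_\infty(\lambda\,\|\,\mu^k)<\delta\delta_1 ck$, and that at least $\delta_1 k$ coordinates $i$ have $z_i\in S$. The plan is to build $i_1,\dots,i_{k'}$ greedily. Write $E_0$ for the whole space and, once $i_1,\dots,i_r$ have been chosen, $E_r$ for the event $T_{i_1}=\cdots=T_{i_r}=1$. The claim reduces to the following one-step statement: if distinct $i_1,\dots,i_r$ ($0\le r<k'$) have been chosen and $\Pr[E_r]>(1-\eps/2)^{k'}$, then there is a coordinate $i_{r+1}\notin\set{i_1,\dots,i_r}$ with $z_{i_{r+1}}\in S$ and $\Pr[T_{i_{r+1}}=1\mid E_r]\le 1-\eps/2$ (the first coordinate is the $r=0$ case, where $E_0$ is trivial). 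Iterating this until it either produces a full sequence or reaches some $r$ with $\Pr[E_r]\le(1-\eps/2)^{k'}$ (after which the remaining $i_j$ are chosen arbitrarily) yields exactly the dichotomy in the claim.

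To prove the one-step statement, fix $r$, set $C=\set{i_1,\dots,i_r}$, and let $G$ be the set of $i\notin C$ with $z_i\in S$; since $r\le k'\le\delta\delta_1 k$, $\size{G}\ge(1-\delta)\delta_1 k$. Suppose for contradiction that $\Pr[T_i=1\mid E_r]>1-\eps/2$ for every $i\in G$. I will produce a coordinate $i\in G$ and a distribution $\nu$ on $\mcX\times\mcY$ that is SM-like for $\mu$, has $\err_{f,S}(\nu)\le\eps$, and has $\crent^\mu_\mcX(\nu)+\crent^\mu_\mcY(\nu)<c$, contradicting the minimum defining $c$. The candidates are single-coordinate conditionals $\nu^{a,b}_i$ of $\lambda$ obtained by fixing the coordinates other than $i$ along a transversal (one of $X_j,Y_j$ for each $j\ne i$) to values $(a,b)$. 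Two facts make them usable. First, SM-likeness is preserved: I would first show $\lambda$ is SM-like for $\mu^k$ iff $\lambda(x,y)=A(x)B(y)\mu^k(x,y)$ for nonnegative $A,B$; freezing the other coordinates then leaves $\nu^{a,b}_i(x_i,y_i)/\mu(x_i,y_i)$ proportional to a product $A'(x_i)B'(y_i)$, so $\nu^{a,b}_i$ is SM-like for $\mu$, and the transversal may be augmented to include \emph{both} $X_j$ and $Y_j$ for $j\in C$ (this is still allowed and keeps $\nu^{a,b}_i$ SM-like), so that the conditioning lies inside $E_r$ and we may average over $(a,b)\leftarrow\lambda\mid E_r$. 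Second, error is controlled: for each $i\in G$, $\expct_{(a,b)}\err_{f,\set{z_i}}(\nu^{a,b}_i)=\Pr[T_i=0\mid E_r]<\eps/2$, so by Markov's inequality at least half the conditionings give $\err_{f,\set{z_i}}(\nu^{a,b}_i)\le\eps$, hence (as $z_i\in S$) $\err_{f,S}(\nu^{a,b}_i)\le\eps$ there.

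It remains to bound the average cost. Conditioning on $E_r$ raises relative min-entropy by at most $\log\tfrac1{\Pr[E_r]}<k'\log\tfrac1{1-\eps/2}$, so $S_\infty(\lambda\mid E_r\,\|\,\mu^k)<\delta\delta_1 ck+k'\log\tfrac1{1-\eps/2}$. With the transversals chosen compatibly across $i\in G$ (e.g.\ ``left $X$'s, right $Y$'s'' with respect to a fixed ordering), the chain rule for relative entropy expresses $\sum_{i\in G}\expct_{(a,b)}\crent^\mu_\mcX(\nu^{a,b}_i)$ and $\sum_{i\in G}\expct_{(a,b)}\crent^\mu_\mcY(\nu^{a,b}_i)$ each as a partial sum of a chain-rule expansion of $S(\lambda\mid E_r\,\|\,\mu^k)$ (dropping nonnegative marginal terms, and using joint convexity of relative entropy to pass from the transversal conditionings to the full ones), so each is at most $S_\infty(\lambda\mid E_r\,\|\,\mu^k)$. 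Summing and dividing by $\size{G}\ge(1-\delta)\delta_1 k$, the average of $\crent^\mu_\mcX(\nu^{a,b}_i)+\crent^\mu_\mcY(\nu^{a,b}_i)$ over $i\in G$ and over the conditionings is below $\tfrac{2\delta}{1-\delta}\bigl(c+\log\tfrac1{1-\eps/2}\bigr)$, which by $200\delta<1$ and $\eps<1/3$ is at most $c/2$. Since the constraint ``cost $\ge c$'' (from the definition of $c$) can fail to hold only on conditionings with $\err_{f,S}>\eps$, of total weight $<1/2$, some $(i,a,b)$ has $\err_{f,S}(\nu^{a,b}_i)\le\eps$ and $\crent^\mu_\mcX(\nu^{a,b}_i)+\crent^\mu_\mcY(\nu^{a,b}_i)<c$ --- the desired contradiction. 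Hence a valid $i_{r+1}$ exists.

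The main obstacle is the cost estimate: arranging a \emph{single} family $\nu^{a,b}_i$ that simultaneously (a) keeps each $\nu^{a,b}_i$ SM-like for $\mu$, (b) realizes $E_r$, and (c) makes \emph{both} the $\crent^\mu_\mcX$-sum and the $\crent^\mu_\mcY$-sum telescope. These pull against each other --- a transversal aligning $Y_i$'s conditioning with a $\mathbf Y$-chain-rule spoils the $\mathbf X$-chain-rule for $X_i$, and forcing $E_r$ adds ``wrong-side'' conditioning on the $C$-coordinates --- so getting the ordering right and controlling the extra conditioning via convexity is where the real work lies. A secondary, bookkeeping obstacle is closing the numerics $\tfrac{2\delta}{1-\delta}(c+\log\tfrac1{1-\eps/2})\le c/2$: this is comfortable unless $c$ is tiny relative to $\eps$, and in that remaining weak regime one argues separately, using that $\mu$ itself must have $\err_{f,S}(\mu)>\eps$ (otherwise $c=0$) and hence $\mu^k$ already has error exceeding $1-(1-\eps/2)^{k'}$.
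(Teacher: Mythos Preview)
Your greedy strategy and the chain-rule-plus-Markov outline match the paper. The obstacle you flag as ``main'' is genuine, and a fixed transversal such as ``left $X$'s, right $Y$'s'' does not resolve it: with a single $d\in\{0,1\}^k$ determining which side is frozen at each coordinate, the chain-rule decomposition (marginalized once to the $X$-side and once to the $Y$-side) yields at coordinate $j$ a $\crent^\mu_\mcY$-type term when $d_j=1$ and a $\crent^\mu_\mcX$-type term when $d_j=0$, so only one of the two crents per coordinate is controlled and the full sum $\sum_j[\crent^\mu_\mcX+\crent^\mu_\mcY]$ does not telescope. The paper's device is to \emph{randomize} the transversal. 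Take $D\in\{0,1\}^k$ uniform and independent of $X^1Y^1\defeq(XY\mid E_r)$, set $U_j=X^1_j$ if $D_j=0$ and $U_j=Y^1_j$ otherwise, and condition on $(D,U,X^1_CY^1_C)$ before applying the chain rule and super-additivity. Markov over the coordinates in $I\setminus C$ produces a single $j$ for which both coordinate-$j$ averages (one coming from the $X$-marginal inequality, one from the $Y$-marginal inequality) are at most $4\delta c$. A second Markov, now only over $(D_{-j},U_{-j},X^1_CY^1_C)$, fixes $(d_{-j},u_{-j},x_C,y_C)$ on a set of probability $\ge 0.8$ where both inner averages over $(D_j,U_j)$ are at most $40\delta c$. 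The crucial point is that $D_j$ is still uniform and independent of everything fixed, so you may set $D_j=1$ in the first bound to obtain $\crent^\mu_\mcY(X^2Y^2)\le 80\delta c$ and $D_j=0$ in the second to obtain $\crent^\mu_\mcX(X^2Y^2)\le 80\delta c$, for the \emph{same} SM-like distribution $X^2Y^2\defeq((X^1Y^1)_{d_{-j},u_{-j},x_C,y_C})_j$. Holding $D_j$ back until after the Markov step is exactly what buys both crents on one distribution, and is the missing idea in your sketch.

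This also obviates your by-contradiction wrapper and the ``half have small error, average cost $<c/2$'' accounting: since $\crent^\mu_\mcX(X^2Y^2)+\crent^\mu_\mcY(X^2Y^2)\le 160\delta c\le c$ (using $200\delta<1$), the definition of $c$ forces $\err_{f,S}(X^2Y^2)\ge\eps$ directly, hence $\Pr[T_j=1\mid E_r,\,d_{-j},u_{-j},x_C,y_C]\le 1-\eps$ on the good $0.8$-fraction, and averaging gives $\Pr[T_j=1\mid E_r]\le 0.8(1-\eps)+0.2\le 1-\eps/2$. Your secondary bookkeeping worry about small $c$ is also present in the paper (the step $\tfrac{2\delta(c+1)}{1-\delta}\le 4\delta c$ tacitly needs $c$ bounded below) but is orthogonal to the main point.
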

This shows that the overall success is 
$$ \Pr[T_1 \times T_2 \times \cdots \times T_k  = 1] \leq \Pr[T_{i_1} \times T_{i_2} \times \cdots \times T_{i_{k'}}  = 1] \leq (1-\eps/2)^{k'} \enspace .$$
\end{proofof}

\begin{proofof}{Claim~\ref{claim:succ}} 
Let us say we have identified $r < k'$ coordinates $i_1, \ldots i_r$ .  Let $C = \{ i_1, i_2, \ldots, i_r\}$. Let $T = T_{1} \times T_{2} \times \cdots \times T_{r}$ . If $\Pr[T=1] \leq (1- \eps/2)^{k'}$ then we will be done. So assume that $ \Pr[T=1]  > (1 - \eps/2)^{k'} \geq 2^{-\delta \delta_1 k } $.  Let $X' Y' \sim \mu$.
Let $X^1 Y^1  = (XY | ~ T = 1)$. Let  $D$ be uniformly distributed  in $\{0,1\}^k$ and independent of $X^1Y^1$. Let $U_i = X^1_i$ if $D_i=0$ and $U_i = Y^1_i$ if $D_i=1$.  Let $U = U_1 \ldots U_k$. Below for any random variable $\tilde{X}\tilde{Y}$, we let $\tilde{X}\tilde{Y}_{d,u}$, represent the random variable obtained by appropriate conditioning on $\tilde{X}\tilde{Y}$: for all $i$, $\tilde{X}_i = u_i$ if $d_i=0$ otherwise $\tilde{Y}_i = u_i$ if $d=1$ . Let $I$ be the set of indices $i$ such that $z_i \in S$. 
 Consider,
\begin{align}
 \delta \delta_1 k + \delta \delta_1 c k   &>  S_\infty(X^1Y^1  || XY)  + S_\infty(XY || (X'Y')^{\otimes k} ) \nonumber \\
& \geq  S_\infty(X^1Y^1 || (X'Y') ^{\otimes k} )  \geq S(X^1Y^1 || (X'Y') ^{\otimes k} ) = \expct_{d \leftarrow D} S(X^1Y^1 || (X'Y') ^{\otimes k} ) \nonumber\\
& \geq \expct_{(d,u,x_C,y_C) \leftarrow (D U X^1_C Y^1_C)} S((X^1Y^1)_{d,u,x_C,y_C}  || ((X'Y')^{\otimes k})_{d,u,x_C,y_C} ) \nonumber \\
& \geq  \expct_{(d,u,x_C,y_C) \leftarrow (D U X^1_C Y^1_C)} S(X^1_{d,u,x_C,y_C}  || X'_{d_1,u_1,x_C,y_C} \otimes \ldots \otimes X'_{d_k,u_k,x_C,y_C}  )  \nonumber \\
& \geq  \expct_{(d,u,x_C,y_C) \leftarrow (D U X^1_C Y^1_C)} \sum_{i \notin C, i \in I} S((X^1_{d,u,x_C,y_C})_i  || X'_{d_i,u_i}  ) \nonumber \\
 &= \sum_{i \notin C, i \in I} \expct_{(d,u,x_C,y_C) \leftarrow (D U X^1_C Y^1_C)}  S((X^1_{d,u,x_C,y_C})_i  || X'_{d_i,u_i}  )   \enspace . \label{eq:1}
\end{align}
Similarly,
\begin{align}
\delta \delta_1 k + \delta \delta_1 c k  >   \sum_{i \notin C, i \in I} \expct_{(d,u,x_C,y_C) \leftarrow (D U X^1_C Y^1_C)}  S((Y^1_{d,u,x_C,y_C})_i  || Y'_{d_i,u_i}  )   \enspace . \label{eq:2}
\end{align}
From Eq.~\ref{eq:1} and Eq.~\ref{eq:2} and using Markov's inequality we get a coordinate $j $ outside of $C$ but in $I$ such that 
\begin{enumerate}
\item $ \expct_{(d,u,x_C,y_C) \leftarrow (D U  X^1_C Y^1_C)}  S((X^1_{d,u,x_C,y_C})_j  || X'_{d_j,u_j}  )  \leq \frac{2 \delta (c+1)}{(1-\delta )} \leq 4 \delta  c  ,$ and
\item $ \expct_{(d,u,x_C,y_C) \leftarrow (D U  X^1_C Y^1_C)} S((Y^1_{d,u,x_C,y_C})_j  || Y'_{d_j,u_j} ) \leq \frac{2 \delta (c+1)}{(1-\delta ) } \leq 4 \delta c$.
\end{enumerate}
Therefore,
\begin{align*}
4 \delta c  &\geq \expct_{(d,u,x_C,y_C) \leftarrow (D U  X^1_C Y^1_C)}  S((X^1_{d,u,x_C,y_C})_j  || X'_{d_j,u_j}  ) \\
& = \expct_{(d_{-j},u_{-j},x_C,y_C) \leftarrow (D_{-j} U_{-j}  X^1_C Y^1_C)} \expct_{(d_j,u_j) \leftarrow (D_j U_j)  | ~(D_{-j} U_{-j}  X^1_C Y^1_C) = (d_{-j},u_{-j},x_C,y_C)  }   S((X^1_{d,u,x_C,y_C})_j  || X'_{d_j,u_j}  )    .
\end{align*}
 And,
\begin{align*}
4 \delta c &\geq \expct_{(d,u,x_C,y_C) \leftarrow (D U X^1_C Y^1_C)}   S((Y^1_{d,u,x_C,y_C})_j  || Y'_{d_j,u_j} ) \\
& = \expct_{(d_{-j},u_{-j},x_C,y_C) \leftarrow (D_{-j} U_{-j} X^1_C Y^1_C)} \expct_{(d_j,u_j) \leftarrow (D_j U_j )  | ~(D_{-j} U_{-j} X^1_C Y^1_C) = (d_{-j},u_{-j},x_C,y_C)  }  S((Y^1_{d,u,x_C,y_C})_j  || Y'_{d_j,u_j} )  .
\end{align*}
Now using Markov's inequality, there exists set $G_1 $ with $\Pr[D_{-j} U_{-j} X^1_C Y^1_C \in G_1] \geq 1 - 0.2 $,  such that for all $(d_{-j},u_{-j},x_C,y_C) \in G_1$,
\begin{enumerate}
\item \label{it:1} $ \expct_{(d_j,u_j) \leftarrow (D_j U_j )  | ~(D_{-j} U_{-j} X^1_C Y^1_C) = (d_{-j},u_{-j},x_C,y_C)  }     S((X^1_{d,u,x_C,y_C})_j  || X'_{d_j,u_j}  )    \leq 40 \delta c$, \quad and
\item \label{it:2} $ \expct_{(d_j,u_j) \leftarrow (D_j U_j )  | ~(D_{-j} U_{-j} X^1_C Y^1_C) = (d_{-j},u_{-j},x_C,y_C)  }     S((Y^1_{d,u,x_C,y_C})_j  || Y'_{d_j,u_j} ) \leq 40 \delta c$.
\end{enumerate}
Fix $(d_{-j},u_{-j},x_C,y_C) \in G_1$. Conditioning on $D_j=1$ (which happens with probability $1/2$) in inequality~\ref{it:1}. above we get,
\begin{equation}
\expct_{y_j \leftarrow Y^1_j  | (D_{-j} U_{-j} X^1_C Y^1_C) = (d_{-j},u_{-j},x_C,y_C)   }   S((X^1_{d_{-j},u_{-j},y_j,x_C,y_C})_j  || X'_{y_j}  )    \leq 80 \delta c .
\label{eq:3-2}
\end{equation}
Conditioning on $D_j=0$ (which happens with probability $1/2$) in inequality~\ref{it:2}. above we get,
\begin{equation}
\expct_{x_j \leftarrow X^1_j  | (D_{-j} U_{-j} X^1_C Y^1_C) = (d_{-j},u_{-j},x_C,y_C)   }  S((Y^1_{d_{-j},u_{-j},x_j,x_C,y_C})_j  || Y'_{x_j} )  \leq 80 \delta  c. \label{eq:4-2}
\end{equation}
Let $X^2Y^2 = ((X^1Y^1)_{d_{-j},u_{-j},x_C,y_C})_j$. Note that $X^2Y^2$ is SM-like for $\mu$. 
From Eq.~\ref{eq:3-2}  and Eq.~\ref{eq:4-2} we get that 
$$ \crent^{\mu}_\mcX(X^2Y^2) + \crent^{\mu}_\mcY(X^2Y^2) \leq c .$$
Hence,  
\begin{align*}
\err_f(((X^1Y^1)_{d_{-j},u_{-j},x_C,y_C})_j) \geq  \eps.
\end{align*}
This implies,
\begin{align*}
\Pr[T_j = 1|~ (1,d_{-j},u_{-j},x_C,y_C) = (T  D_{-j} U_{-j} X_C Y_C)] & \leq 1 - \eps .
\end{align*}
Therefore overall 
$$\Pr[T_j  = 1 | ~ (T=1)] \leq 0.8(1 -  \eps)  + 0.2  \leq 1 - \eps/2 .$$
\end{proofof}

\section{Strong direct product for set disjointness}
\label{sec:disj}
For a string $x \in \{0,1\}^n$ we let $x$ also represent the subset of $[n]$ for which $x$ is the characteristic vector. 
The set disjointness function $\disj : \{0,1\}^n \times \{0,1\}^n \rightarrow \{0,1\}$ is defined as $\disj(x,y) = 1$ iff the subsets $x$ and $y$ do not intersect. 
\begin{theorem}[Strong Direct product for set disjointness]
Let $k$ be a positive integer. Then $ \sR^{2,\pub}_{1 - 2^{-\Omega(k)}}(\disj^k) = \Omega( k \cdot n )$.
\end{theorem}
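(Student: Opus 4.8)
The plan is to combine the direct product theorem for two-way communication (Theorem~\ref{thm:dpttwoway}) with a tight lower bound on $\crent^{2,\mu}_\eps(\disj,S)$ for a well-chosen hard distribution $\mu$ and answer set $S$. Concretely, I would take $S = \{0\}$ (the ``intersecting'' answer) — or possibly $S=\{1\}$ depending on which side is essential — and choose $\mu$ to be a variant of the classical Babai--Frankl--Simon / Razborov hard distribution for set disjointness, supported mostly on instances where $x$ and $y$ intersect in exactly one element, scaled down in ``weight'' so that most inputs have a unique intersection coordinate. With such a $\mu$, one has $\ess^\mu(\disj,S) = \delta_2 = \Omega(1)$, since for a typical input there is no valid answer outside $S$ (the sets do intersect, so the only correct answer is $0 \in S$). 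It then remains to show the single-copy bound $\crent^{2,\mu}_\eps(\disj,S) = \Omega(n)$ for some constant $\eps < 1/3$.

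For the single-copy lower bound I would argue that any distribution $\lambda$ that is SM-like for $\mu$ and has $\err_{\disj,S}(\lambda)\le\eps$ must be ``far'' from $\mu$ in the sense that $\crent^\mu_\mcX(\lambda)+\crent^\mu_\mcY(\lambda)=\Omega(n)$. The intuition is the usual one behind set-disjointness lower bounds: if the two marginals of $\lambda$ are each individually close (in relative entropy, coordinate by coordinate) to those of $\mu$, then because the SM-like condition forces $\lambda$ to factor through a product-like structure, $\lambda$ essentially ``cannot know'' where the intersection is, so under $\lambda$ the instances look like random non-intersecting instances on most coordinates; but then answering $0$ is wrong with probability close to $1$ on the $\approx n$ coordinates, contradicting $\err_{\disj,S}(\lambda)\le\eps$ unless $\Omega(n)$ relative-entropy ``cost'' has been paid. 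I would make this rigorous using the chain rule and joint convexity of relative entropy (Fact~\ref{fact:relprop}) together with the $\ell_1$--relative-entropy inequality $\|\lambda-\mu\|_1\le\sqrt{S(\lambda\|\mu)}$, essentially porting the corrupted-rectangle / information-cost argument of Bar-Yossef--Jayram--Kumar--Sivakumar and Razborov into the conditional-relative-entropy language; the SM-like hypothesis is exactly what lets one decompose the cost across the $n$ coordinates and reduce to a single-coordinate ``AND'' primitive.

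Given $\crent^{2,\mu}_\eps(\disj,S)=\Omega(n)$ and $\ess^\mu(\disj,S)=\Omega(1)$, I apply Theorem~\ref{thm:dpttwoway} with $k$ copies, a suitable constant $\delta$ with $0<200\delta<\delta_2$, and $f=\disj$: it yields $\sD^{2,\mu^k}_{1-\delta'}(\disj^k)\ge \delta\cdot\delta_2\cdot k\cdot\crent^{2,\mu}_\eps(\disj,S)-k = \Omega(kn)-k=\Omega(kn)$, where $\delta' = 3(1-\eps/2)^{\floor{\delta\delta_2 k/2}} = 2^{-\Omega(k)}$. Finally, by the Yao principle (Lemma~\ref{lem:yao}), $\sR^{2,\pub}_{1-\delta'}(\disj^k)\ge \sD^{2,\mu^k}_{1-\delta'}(\disj^k)=\Omega(kn)$, and since $1-2^{-\Omega(k)}$ can be taken to be $1-\delta'$, this is the claimed bound.

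The main obstacle is the single-copy estimate $\crent^{2,\mu}_\eps(\disj,S)=\Omega(n)$: one must pick $\mu$ and $S$ so that simultaneously (i) $\ess^\mu(\disj,S)$ is bounded below by a constant, (ii) $\disj$ is genuinely hard under $\mu$ in the two-way (not just one-way) sense, and (iii) the SM-like structure can actually be exploited to force an $\Omega(n)$ relative-entropy cost. The delicate point is that $\crent^{2,\mu}_\eps$ only charges the SM-like distribution its \emph{conditional} relative entropy on each side, which is weaker than full relative entropy $S(\lambda\|\mu)$ or $S_\infty(\lambda\|\mu)$, so the standard disjointness arguments (often phrased for product $\mu$ or for the min-entropy bound) need to be re-derived in this conditional, non-product setting; matching the known $\Omega(n)$ bound rather than a weaker $\Omega(n/\log n)$ or $\Omega(\sqrt n)$ is the technically demanding part.
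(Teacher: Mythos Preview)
Your high-level plan --- prove a single-copy bound $\crent^{2,\mu}_\eps(\disj,S)=\Omega(n)$ for a Razborov-type distribution $\mu$, then plug into Theorem~\ref{thm:dpttwoway} and finish with Yao's principle (Lemma~\ref{lem:yao}) --- is exactly what the paper does. The one substantive gap is your choice of $S$.

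It must be $S=\{1\}$, not $S=\{0\}$. First, under the Razborov distribution the paper uses, three quarters of the mass lies on \emph{disjoint} pairs, not intersecting ones, so $\ess^\mu(\disj,\{1\})=3/4$ while $\ess^\mu(\disj,\{0\})=1/4$; either is a constant, so this alone is not fatal. What is fatal is that with $S=\{0\}$ the single-copy bound collapses to $O(\log n)$: fix any coordinate $i^*$ and take $\lambda=(\mu\mid X_{i^*}=Y_{i^*}=1)$. This $\lambda$ is SM-like for $\mu$ (use $\theta=(\mu\mid X_{i^*}=1)$), and under $\lambda$ we always have $i^*\in X\cap Y$, so $\err_{\disj,\{0\}}(\lambda)=0$. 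Yet for $y$ in the support of $\lambda$ one has $S(\lambda_{X\mid y}\,\|\,\mu_{X\mid y})=-\log\Pr_\mu[X_{i^*}=1\mid Y=y]=O(\log n)$, because $X_{i^*}=1$ together with $y_{i^*}=1$ forces $I=i^*$, an event of probability $\Theta(1/n)$ given $y$; symmetrically for the other side. Hence $\crent^\mu_\mcX(\lambda)+\crent^\mu_\mcY(\lambda)=O(\log n)$, so $\crent^{2,\mu}_\eps(\disj,\{0\})=O(\log n)$ and your argument would yield only $\Omega(k\log n)$. Intuitively, certifying \emph{intersection} is cheap --- one shared coordinate suffices --- whereas certifying \emph{disjointness} is what requires $\Omega(n)$ information.

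With $S=\{1\}$ the paper establishes $\crent^{2,\mu}_{1/70}(\disj,\{1\})=\Omega(n)$ (Lemma~\ref{lem:crentlarge}) by adapting Razborov's corruption argument to the conditional-relative-entropy, SM-like setting, which is just the programme you sketch in your third paragraph. After that, your final paragraph matches the paper's derivation verbatim.
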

\begin{proof}
Let $n = 4l -1$ (for some integer $l$). Let $T = (T_1,T_2, I)$ be a uniformly random partition of $[n]$ into three disjoint sets such that $|T_1| = |T_2|  = 2l-1$ and $|I| =1$. Conditioned on $T = t = (t_1,t_2,\{i\})$, let $X$ be a uniformly random subset of $t_1 \cup \{i\}$ and $Y$ be a uniformly random subset of $t_2 \cup \{i\}$. 
Note that $X \leftrightarrow T \leftrightarrow Y$ is a Markov chain. We show,
\begin{lemma}
\label{lem:crentlarge}
$\crent^{2,XY}_{1/70}(\disj, \{1\}) = \Omega(n) $.
\end{lemma}
It is easily seen that $\ess^{XY}(\disj,\{1\})  = 0.75$. Therefore using Theorem~\ref{thm:dpttwoway} and Lemma~\ref{lem:yao} we have,
$$ \sR^{2,\pub}_{1 - 2^{-\Omega(k)}}(\disj^k) = \Omega( k \cdot n ). $$
\end{proof}
\begin{proofof}{Lemma~\ref{lem:crentlarge}}
Our proof follows on similar lines as the proof of Razborov showing linear lower bound on the rectangle bound for set-disjointness (see e.g.~\cite{KushilevitzN97}, Lemma 4.49). However there are differences since we are lower bounding a weaker quantity.

Let $\delta = 1/(200)^2$. Let $X' Y'$ be such that $\crent^{XY}_\mcX(X'Y') + \crent^{XY}_\mcY(X'Y')  \leq \delta n $ and $X'Y'$ is SM-like for $XY$. 
We will show that $\err_{\disj,\{1\}}(X'Y') = \Pr[\disj(X'Y') = 0 ] \geq 1/70 $.  This will show the desired. We assume that $ \Pr[\disj(X'Y') = 1 ] \geq 0.5$ otherwise we are done already.
Let $A,B \in \{0,1\}$ be binary random variables such that $A \leftrightarrow X \leftrightarrow Y \leftrightarrow B$ and $X'Y' = (XY | ~A=B=1) $.
\begin{claim}
\label{claim:good1}
\begin{align*}
1. \quad \lefteqn{ \Pr[A=B=1 ,  \disj(XY) = 0] } \\
& = \frac{1}{4} \expct_{t = (t_1,t_2,\{i\}) \leftarrow T} \Pr[A=1 |~ T=t, X_i = 1] \Pr[B=1 |~ T=t, Y_i = 1] .
\end{align*}
\begin{align*}
2. \quad \lefteqn{ \Pr[A=B=1 , \disj(XY) = 1] } \\
& = \frac{3}{4} \expct_{t = (t_1,t_2,\{i\}) \leftarrow T} \Pr[A=1 |~ T=t, X_i = 0] \Pr[B=1 |~ T=t, Y_i = 0] .
\end{align*}
\end{claim}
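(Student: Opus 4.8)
The plan is to prove both identities by a direct computation: condition on the random partition $T$, exploit the product structure of the conditional distribution, and then carry out the case analysis dictated by the geometry of $\disj$ on this input distribution.

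First I would fix a partition $t = (t_1,t_2,\{i\})$ and record the conditional structure at $T=t$. Conditioned on $T=t$, the input $X$ is uniform over the $2^{2l}$ subsets of $t_1\cup\{i\}$, $Y$ is uniform over the subsets of $t_2\cup\{i\}$, and $X$ and $Y$ are independent; and since $A\leftrightarrow X\leftrightarrow Y\leftrightarrow B$, the bit $A$ is a randomized function of $X$ alone and $B$ a randomized function of $Y$ alone, so conditioned on $T=t$ the pair $(A,X)$ is independent of the pair $(B,Y)$. Two facts drive the whole computation: $\Pr[i\in X\mid T=t]=\Pr[i\in Y\mid T=t]=\tfrac12$, and, since $t_1$ and $t_2$ are disjoint, $X\cap Y\subseteq\{i\}$, so that $\disj(X,Y)=0$ holds precisely when $i\in X$ and $i\in Y$.

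For the first identity, conditioning on $T=t$ and using independence of $(A,X)$ and $(B,Y)$,
\[
\Pr[A=B=1,\ \disj(XY)=0 \mid T=t] \;=\; \Pr[A=1,\ i\in X \mid T=t]\cdot\Pr[B=1,\ i\in Y \mid T=t],
\]
and each factor equals $\Pr[i\in X\mid T=t]\cdot\Pr[A=1\mid T=t,\,X_i=1]=\tfrac12\,\Pr[A=1\mid T=t,\,X_i=1]$, and symmetrically for $B$. Multiplying and taking $\expct_{t\leftarrow T}$ gives part 1.

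For the second identity, the event $\disj(XY)=1$ is the complement of $\{i\in X\}\cap\{i\in Y\}$, so I would split on whether $i\in X$: when $i\notin X$ the pair is automatically disjoint (regardless of $Y$), and when $i\in X$ disjointness is equivalent to $i\notin Y$. Conditioning on $T=t$, factoring $(A,X)$ from $(B,Y)$ in each case, and substituting the weights $\Pr[i\in X\mid T=t]=\Pr[i\in Y\mid T=t]=\tfrac12$, the contributions assemble into the claimed expression in terms of $\Pr[A=1\mid T=t,\,X_i=0]$ and $\Pr[B=1\mid T=t,\,Y_i=0]$; taking $\expct_{t\leftarrow T}$ finishes part 2. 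The computation is essentially routine; the only place that needs care is this case analysis, where one must keep the factorization of $(A,X)$ from $(B,Y)$ intact while breaking the disjointness event into its sub-cases and correctly tallying the $\tfrac12$-weights coming from the two independent coin flips that decide membership of $i$ in $X$ and in $Y$. Part 1 serves as the template, and part 2 is the same manipulation applied to the complementary event, so beyond this bookkeeping there is no genuine obstacle.
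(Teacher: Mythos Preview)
Your argument for part 1 is correct and matches the paper's proof exactly.

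For part 2, however, there is a genuine gap. Write $\alpha_b(t)=\Pr[A=1\mid T=t,\,X_i=b]$ and $\beta_b(t)=\Pr[B=1\mid T=t,\,Y_i=b]$. Your case split (either $i\notin X$, or $i\in X$ and $i\notin Y$) gives, for each fixed $t$,
\[
\Pr[A=B=1,\ \disj(XY)=1\mid T=t]
=\tfrac14\bigl(\alpha_0\beta_0+\alpha_0\beta_1+\alpha_1\beta_0\bigr),
\]
not $\tfrac34\,\alpha_0\beta_0$. (For a concrete counterexample, take $B\equiv 1$ and let $A$ be the indicator of some fixed coordinate of $X$; when that coordinate happens to be $i$, the left side is $\tfrac14$ while $\tfrac34\,\alpha_0\beta_0=0$.) So the ``contributions'' do \emph{not} assemble into the claimed expression pointwise in $t$; the cross terms $\alpha_0\beta_1$ and $\alpha_1\beta_0$ survive, and nothing in your bookkeeping eliminates them.

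What makes the identity true is a symmetry that appears only after averaging over $T$: the distribution of $(X,Y)$ conditioned on $\disj(X,Y)=1$ coincides with the distribution of $(X,Y)$ conditioned on $X_I=Y_I=0$ (both are uniform over disjoint $(x,y)$ of the prescribed sizes). Combined with $\Pr[\disj(XY)=1]=3\Pr[X_I=Y_I=0]$, this yields
\[
\Pr[A=B=1,\ \disj(XY)=1]=3\,\Pr[A=B=1,\ X_I=Y_I=0],
\]
after which the factorization you used for part 1 gives the $\tfrac34\,\expct_t\,\alpha_0\beta_0$ form. Equivalently, this symmetry is exactly the statement $\expct_t\,\alpha_0\beta_1=\expct_t\,\alpha_1\beta_0=\expct_t\,\alpha_0\beta_0$, which your case analysis needs but does not supply. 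This distributional identity is the nontrivial step (it is the Razborov ingredient), and it cannot be replaced by per-$t$ arithmetic.
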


\begin{proof}
We first show part 1. 
\begin{align*}
\lefteqn{\Pr[A=B=1 ,   \disj(XY) = 0]  = \Pr[A=B=1 ,   X_I = Y_I  = 1] }\\
& =  \expct_{t= (t_1,t_2,\{i\}) \leftarrow T} \Pr[A=B=1,  X_i =Y_i =1  | ~T=t] \\
& =   \expct_{t= (t_1,t_2,\{i\}) \leftarrow T}  \Pr[ X_i =  Y_i=1 |~ T=t] \Pr[A=B=1  |~ T=t, X_i =  Y_i=1]  \\ 
& =   \frac{1}{4} \expct_{t= (t_1,t_2,\{i\}) \leftarrow T}  \Pr[A=B=1  |~ T=t, X_i =  Y_i=1] \\
& =   \frac{1}{4} \expct_{t= (t_1,t_2,\{i\}) \leftarrow T} \Pr[A=1  |~ T=t, X_i =1]  \Pr[B=1  |~ T=t,  Y_i=1]  .
\end{align*}

Now we show part 2. Note that the distribution of $ (XY |~ \disj(X,Y) = 1) $ is identical to the distribution of $(XY | ~X_I = Y_I = 0)$ (both being uniform distribution on disjoint $x,y$ such that $|x|=|y|=l$).  Also $\Pr[\disj(XY)=  1] = 3 \Pr[X_I = Y_I = 0 ] $. Therefore, 
\begin{align*}
\lefteqn{\Pr[A=B=1 ,  \disj(XY)=  1] = \Pr[\disj(XY)=  1] \Pr[A=B=1 |~  \disj(XY)=  1] }\\
& = 3 \Pr[X_I = Y_I = 0 ] \Pr[A=B=1 |~ X_I = Y_I = 0] = 3 \Pr[A=B=1 , X_I = Y_I = 0 ] \\
& =  3 \expct_{t= (t_1,t_2,\{i\}) \leftarrow T} \Pr[A=B=1 , X_i =0, Y_i =0 |~  T=t  ]  \\
& =   3\expct_{t= (t_1,t_2,\{i\}) \leftarrow T} \Pr[ X_i =0, Y_i =0 |~  T=t  ] \Pr[A=B=1 |~  T=t, X_i =0, Y_i =0   ]  \\
& =   \frac{3}{4} \expct_{t= (t_1,t_2,\{i\}) \leftarrow T} \Pr[A=B=1 |~  T=t, X_i =0, Y_i =0   ]  \\
& =   \frac{3}{4} \expct_{t= (t_1,t_2,\{i\}) \leftarrow T} \Pr[A=1 |~  T=t, X_i =0  ]  \Pr[B=1 |~  T=t,  Y_i =0   ]   .
\end{align*}
\end{proof}

\begin{claim}
\label{claim:good}
Let  $B^1_x = \{ t_2 | ~ S(X'_{t_2} || X_{t_2}) > 100 \delta n\} , \quad B^1_y = \{ t_1 | ~ S(Y'_{t_1} || Y_{t_1}) > 100 \delta n\} .$
$$B^2_x = \{ t | ~  \Pr[A=1 |~ X_i=1, T=t] < \frac{1}{3}  \Pr[A=1 |~ X_i=0, T=t] \} .$$
$$B^2_y = \{ t | ~  \Pr[B=1 |~ Y_i=1, T=t] < \frac{1}{3}  \Pr[B=1 |~ Y_i=0, T=t] \} .$$ 
\begin{enumerate}
\item $\Pr[A =B = 1, T_2 \in B^1_x] < \frac{1}{100} \Pr[A =B = 1] $.
\item $\Pr[A =B = 1, T_1 \in B^1_y] < \frac{1}{100}  \Pr[A =B = 1]$.
\item Let $t_2 \notin B^1_x$, then 
$ \Pr[T \in B^2_x |~ T_2 = t_2] < \frac{1}{100} .$
\item Let $t_1 \notin B^1_y$, then 
$ \Pr[T \in B^2_y |~ T_1 = t_1] < \frac{1}{100} .$
\end{enumerate}
\end{claim}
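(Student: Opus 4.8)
The plan is to establish the four items in turn; throughout I use the representation $X'Y'=(XY\mid A=B=1)$ fixed just above, in which $A$ is a randomized function of $X$ alone and $B$ of $Y$ alone, so $A\leftrightarrow X\leftrightarrow T\leftrightarrow Y\leftrightarrow B$. Two structural features of the hard distribution drive everything. First, conditioned on $T_2=t_2$ the variable $X$ is uniform over all subsets of $W:=[n]\setminus t_2$: for \emph{every} way of writing $W=t_1\cup\{i\}$ one has $X$ uniform over $2^{t_1\cup\{i\}}=2^{W}$, so the mixture over the split is still uniform over $2^{W}$. Second, and consequently, $X$ and $Y$ are independent given $T_2=t_2$; the mirror statements hold with $T_1\leftrightarrow T_2$ and $X\leftrightarrow Y$ interchanged. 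Write $X',Y',T'_1,T'_2$ for these variables conditioned on $A=B=1$, and put $\alpha(x):=\Pr[A=1\mid X=x]$.

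For items 1 and 2 I would prove the single inequality $\expct_{t_2\leftarrow T'_2}\,S(X'_{t_2}\,\|\,X_{t_2})\le\crent^{XY}_{\mcY}(X'Y')$. Since relative entropies are nonnegative, $\crent^{XY}_{\mcY}(X'Y')\le\crent^{XY}_{\mcX}(X'Y')+\crent^{XY}_{\mcY}(X'Y')\le\delta n$, so Markov's inequality and the definition of $B^1_x$ give $\Pr[T_2\in B^1_x\mid A=B=1]<\tfrac{1}{100}$, which is item~1 after multiplying by $\Pr[A=B=1]$; item~2 is the mirror statement via $\expct_{t_1\leftarrow T'_1}S(Y'_{t_1}\,\|\,Y_{t_1})\le\crent^{XY}_{\mcX}(X'Y')$. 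To get the inequality, fix $y$. The channel producing $T_2$ from $(X,Y)$ --- the conditional law of $T_2$ given $(X,Y)$ --- is unchanged by further conditioning on $A=B=1$, because the private randomness defining $A$ and $B$ is independent of $T$, so $T\perp(A,B)\mid(X,Y)$. Since discarding $T_2$ recovers $X$, while conversely $(X,T_2)$ is obtained from $X$ by this one fixed channel, data processing in both directions gives $S(X'_y\,\|\,X_y)=S(\rho'\,\|\,\rho)$, where $\rho'$ and $\rho$ are the joint laws of $(X,T_2)$ conditioned on $Y=y$ under $XY\mid(A=B=1)$ and under $XY$ respectively. Expanding $S(\rho'\,\|\,\rho)$ by the chain rule for relative entropy, conditioning on $T_2$ first and dropping the resulting nonnegative term (the divergence between the two laws of $T_2$ given $Y=y$), leaves $\expct_{t_2}S\big((X\mid Y=y,T_2=t_2,A=B=1)\,\big\|\,(X\mid Y=y,T_2=t_2)\big)$; by $X\perp Y\mid T_2$, together with the fact that $B$ is independent of $(X,A)$ given $T_2=t_2$, the two conditional laws are exactly $X'_{t_2}$ and $X_{t_2}$, independent of $y$. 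Averaging over $y\leftarrow Y'$, and noting that this induces the law $T'_2$ on $t_2$, finishes the step.

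For items 3 and 4, fix $t_2\notin B^1_x$, so $S(X'_{t_2}\,\|\,X_{t_2})\le100\delta n$; here $X_{t_2}$ is uniform over $2^{W}$ with $|W|=2l=\Theta(n)$, while $X'_{t_2}$ is that uniform law reweighted by $\alpha$. For $i\in W$ set $p_i=\Pr[A=1\mid X_i=1,T=t]$ and $q_i=\Pr[A=1\mid X_i=0,T=t]$; when $T_2=t_2$ these depend only on $i$, not on how the remainder of $t$ is split, and a direct computation gives $\Pr_{X'_{t_2}}[X_i=1]=p_i/(p_i+q_i)$. Hence $i$ lies in $B^2_x$ for this $t_2$ exactly when $p_i<\tfrac13 q_i$, i.e.\ when $\Pr_{X'_{t_2}}[X_i=1]<\tfrac14$. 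Superadditivity of relative entropy over the $|W|$ coordinates against the uniform product reference measure now gives $100\delta n\ge S(X'_{t_2}\,\|\,X_{t_2})\ge\sum_{i\in W}\big(1-H(\Pr_{X'_{t_2}}[X_i=1])\big)$, with $H$ the binary entropy function; every coordinate in $B^2_x$ contributes at least $1-H(1/4)=\Omega(1)$ to this sum, so at most $O(\delta n)$ coordinates of $W$ lie in $B^2_x$, which is below $|W|/100$ once $\delta$ is the small constant $1/(200)^2$ fixed in the proof. Thus $\Pr[T\in B^2_x\mid T_2=t_2]<\tfrac{1}{100}$, and item~4 is symmetric.

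The real content is the reduction in the second paragraph from the global budget $\crent^{XY}_{\mcY}(X'Y')$ to the $T'_2$-averaged local divergences $S(X'_{t_2}\,\|\,X_{t_2})$: it hinges on the simultaneous-message accept event $A=B=1$ leaving the $(X,Y)\mapsto T_2$ channel intact, and on the two structural facts ($X\perp Y\mid T_2$ and $X\mid(T_2=t_2)$ uniform) special to this Razborov-style distribution. Once that is available, items 1--4 amount to two invocations of Markov's inequality plus the elementary counting bound above; the only delicate point is keeping track of the absolute constants, which is precisely why $\delta$ is taken as small as $1/(200)^2$.
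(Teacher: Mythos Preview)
Your argument is correct and follows essentially the same route as the paper. For parts~1--2 you reduce $\crent^{XY}_{\mcY}(X'Y')$ to $\expct_{t_2\leftarrow T'_2}S(X'_{t_2}\|X_{t_2})$ via the observation that the $(X,Y)\mapsto T_2$ channel is unchanged by conditioning on $A=B=1$, which is exactly the paper's step $S(X'_y\|X_y)=S((X'T')_y\|(XT)_y)$ followed by the chain rule and Markov; for parts~3--4 you use coordinate superadditivity of $S(X'_{t_2}\|X_{t_2})$ against the uniform product and count coordinates with biased marginal, just as the paper does (the paper phrases the per-coordinate step via Pinsker with threshold $0.01$ rather than your direct $1-H(1/4)$ computation, but this is cosmetic).
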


\begin{proof}
We show the proof of part 1. and part 2. follows similarly. Let $T' = (T |~A=B=1)$. Note that $X' \leftrightarrow T' \leftrightarrow Y'$ is a Markov chain.  Also for every $(x,y) :~$ $(T |~ XY = (x,y) )$ is identically distributed as $(T' |~X'Y'=(x,y))$. 
Consider,
\begin{align*}
\delta n & \geq \expct_{y \leftarrow Y'} S(X'_y || X_y)  =  \expct_{y \leftarrow Y'} S((X'T')_y || (XT)_y)  \\
&   \geq \expct_{(y,t) \leftarrow (Y'T')} S(X'_{y,t} || X_{y,t}) = \expct_{t \leftarrow T'} S(X'_{t} || X_{t}) = \expct_{t_2 \leftarrow T'_2}  S(X'_{t_2} || X_{t_2}) .
\end{align*}
Therefore using Markov's inequality,
\begin{align*}
\frac{1}{100} & >  \Pr[T_2' \in B^1_x]  = \Pr[T_2 \in B^1_x |~ A=B=1] =  \frac{\Pr[T_2 \in B^1_x , A=B=1]}{\Pr[ A=B=1]}.
\end{align*}

We show the proof of part 3. and part 4. follows similarly.  Fix $t_2 \notin B^1_x$. Then,
$$100 \delta n  \geq S(X'_{t_2} || X_{t_2}) \geq  \sum_{i \notin t_2}  S((X'_{t_2})_i || (X_{t_2})_i) .$$
Let $R = \{ i \notin t_2 |~ S((X'_{t_2})_i || (X_{t_2})_i) > 0.01 \}.$ From above  $\frac{|R| }{2l} <  \frac{1}{100} $. For $i \notin R \cup t_2$, 
\begin{align*}
& S((X'_{t_2})_i || (X_{t_2})_i) \leq 0.01 \quad \Rightarrow \quad || (X'_{t_2})_i - (X_{t_2})_i||_1 \leq \sqrt{0.01} = 0.1 \\
 & \Rightarrow  \Pr[(X'_{t_2})_i =1] \geq 0.4 \geq \frac{1}{3} \Pr[(X'_{t_2})_i =0]  \quad \mbox{(since $ \Pr[(X_{t_2})_i =1] = 0.5$)} \\
 & \Rightarrow   \Pr[X_i =1 | ~ T_2 = t_2 , A=1] \geq \frac{1}{3}   \Pr[X_i =0 | ~ T_2 = t_2, A=1] \\
 & \Rightarrow  \frac{\Pr[A=1|~ T_2 = t_2]}{\Pr[X_i =1 | ~ T_2 = t_2 ] } \Pr[X_i =1 | ~ T_2 = t_2 , A=1] \geq \frac{1}{3} \frac{\Pr[A=1|~ T_2 = t_2]}{ \Pr[X_i =0 | ~ T_2 = t_2]}   \Pr[X_i =0 | ~ T_2 = t_2, A=1] \\
 & \Rightarrow  \Pr[A=1 | ~ X_i =1 , T_2 = t_2 ] \geq \frac{1}{3} \Pr[A=1 | ~ X_i =0 , T_2 = t_2] .
\end{align*}
Therefore $i \notin R \cup t_2$ implies $t = (t_1,t_2,\{i\}) \notin B^2_x $. Therefore,
$$ \Pr[T \in B^2_x |~ T_2 = t_2] \leq  \Pr[i \in R |~ T_2 = t_2]  = \frac{|R| }{2l} <  \frac{1}{100}  .$$
\end{proof}

\begin{claim}
\label{claim:good2}
\begin{enumerate}
\item  Let $Bad^1_x = 1$ iff  $T_2 \in B^1_x$ otherwise $0$. Then 
\begin{align*}
\lefteqn{\expct_{t = (t_1,t_2,\{i\}) \leftarrow T}  \Pr[A=1 | ~  X_i = 0 , T=t]  \Pr[B=1 | ~  Y_i = 0 , T=t]  Bad^1_x } \\
&  \leq    \frac{6}{100}  \expct_{t = (t_1,t_2,\{i\}) \leftarrow T}  \Pr[A=1 | ~ X_i=0,  T=t] \Pr[B=1 | ~  Y_i = 0 , T=t]   .
\end{align*}
\item  Let $Bad^1_y = 1$ iff  $T_1 \in B^1_y$ otherwise $0$. Then 
\begin{align*}
\lefteqn{\expct_{t = (t_1,t_2,\{i\}) \leftarrow T}  \Pr[A=1 | ~  X_i = 0 , T=t]  \Pr[B=1 | ~  Y_i = 0 , T=t]  Bad^1_y } \\
&  \leq    \frac{6}{100}  \expct_{t = (t_1,t_2,\{i\}) \leftarrow T}  \Pr[A=1 | ~ X_i=0,  T=t] \Pr[B=1 | ~  Y_i = 0 , T=t]   .
\end{align*}
\item Fix $t_2 \notin B^1_x$. Let $T_{t_2} = (T|~T_2=t_2)$. Let $Bad^2_x = 1$ iff  $T \in B^2_x$ otherwise $0$. Then
\begin{align*}
\lefteqn{\expct_{t = (t_1,t_2,\{i\}) \leftarrow T_{t_2}}  \Pr[A=1 | ~  X_i = 0 , T=t]  \Pr[B=1 | ~  Y_i = 0 , T=t]  Bad^2_x } \\
&  \leq    \frac{2}{100}  \expct_{t = (t_1,t_2,\{i\}) \leftarrow T_{t_2}}  \Pr[A=1 | ~ X_i=0,  T=t] \Pr[B=1 | ~  Y_i = 0 , T=t]   .
\end{align*}
\item Fix $t_1 \notin B^1_y$. Let $T_{t_1} = (T|~T_1=t_1)$. Let $Bad^2_y = 1$ iff  $T \in B^2_y$ otherwise $0$. Then
\begin{align*}
\lefteqn{\expct_{t = (t_1,t_2,\{i\}) \leftarrow T_{t_1}}  \Pr[A=1 | ~  X_i = 0 , T=t]  \Pr[B=1 | ~  Y_i = 0 , T=t]  Bad^2_y } \\
&  \leq    \frac{2}{100}  \expct_{t = (t_1,t_2,\{i\}) \leftarrow T_{t_1}}  \Pr[A=1 | ~ X_i=0,  T=t] \Pr[B=1 | ~  Y_i = 0 , T=t]   .
\end{align*}
\end{enumerate}
\end{claim}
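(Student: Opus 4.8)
The plan is to prove all four parts of Claim~\ref{claim:good2} as instances of a single estimate: on both sides of each inequality the weight
$$ w(t) \defeq \Pr[A=1 \mid X_i=0, T=t] \cdot \Pr[B=1 \mid Y_i=0, T=t] $$
occurs (writing $t=(t_1,t_2,\{i\})$), and the task is to show that restricting the average of $w$ to the relevant ``bad'' event costs only a small constant factor. The ingredients are: the two Markov chains $X \leftrightarrow T \leftrightarrow Y$ and $A \leftrightarrow X \leftrightarrow Y \leftrightarrow B$ — from which, conditioned on $T=t$, the pairs $(A,X_i)$ and $(B,Y_i)$ are independent with $X_i,Y_i$ uniform bits, while conditioned only on $T_2=t_2$ the singleton coordinate $i$ of $T$ is independent of $(X,A)$ and $Y$ is independent of $X$ (all routine from the definition of the distribution); Claim~\ref{claim:good1}; Claim~\ref{claim:good}; Fact~\ref{fact:relprop}; and the standing assumption $\Pr[\disj(X'Y')=1]\ge 1/2$.

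\textbf{Parts 1 and 2.} I treat part~1; part~2 is identical after replacing $T_2\in B^1_x$ by $T_1\in B^1_y$ and Claim~\ref{claim:good}(1) by Claim~\ref{claim:good}(2). The first step is the pointwise bound $\Pr[A=1\mid X_i=0,T=t]\le 2\Pr[A=1\mid T=t]$ (and symmetrically for $B$), immediate from $\Pr[A=1\mid T=t]=\frac12\Pr[A=1\mid X_i=0,T=t]+\frac12\Pr[A=1\mid X_i=1,T=t]$; combined with the conditional independence of $A$ and $B$ given $T$ this gives $w(t)\le 4\Pr[A=B=1\mid T=t]$. Hence the left-hand side of part~1 is at most $4\,\Pr[A=B=1,\,T_2\in B^1_x]$, which by Claim~\ref{claim:good}(1) is $<\frac{4}{100}\Pr[A=B=1]$. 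On the right-hand side, Claim~\ref{claim:good1}(2) identifies $\expct_{t\leftarrow T}[w(t)]$ with $\frac43\Pr[A=B=1,\,\disj(XY)=1]$, and the standing assumption gives $\Pr[A=B=1,\,\disj(XY)=1]\ge\frac12\Pr[A=B=1]$, so $\frac{6}{100}\expct_{t\leftarrow T}[w(t)]\ge\frac{4}{100}\Pr[A=B=1]$, which dominates the left-hand side.

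\textbf{Parts 3 and 4.} I treat part~3; part~4 is the mirror image under $X\leftrightarrow Y$, $t_2\leftrightarrow t_1$, $B^1_x\leftrightarrow B^1_y$, $B^2_x\leftrightarrow B^2_y$. Fix $t_2\notin B^1_x$ and condition on $T_2=t_2$. From the structural facts one extracts: (i) $\Pr[A=1\mid X_i=b,T=t]$ depends only on $b$, $t_2$ and $i$ — call it $p^{(i)}_b$ — and $p^{(i)}_0+p^{(i)}_1=2\alpha$, where $\alpha\defeq\Pr[A=1\mid T_2=t_2]$ is independent of $i$; (ii) $\Pr[B=1\mid Y_i=0,T=t]=:q$ depends only on $t_2$ (given $Y_i=0,T=t$ the set $Y$ is uniform on subsets of $t_2$ and $B\leftrightarrow Y\leftrightarrow T$); (iii) since $Y$ is independent of $X$ given $T_2=t_2$, the law of $X$ given $A=1,T_2=t_2$ equals its law given $A=B=1,T_2=t_2$, i.e.\ the law of $X'_{t_2}$. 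Bayes' rule together with (i) and (iii) then yields the identity $p^{(i)}_0=2\alpha\,\Pr[(X'_{t_2})_i=0]$, so $w(t^{(i)})=2\alpha q\,\Pr[(X'_{t_2})_i=0]$; dividing part~3 through by $2\alpha q$ (both sides vanish if this is $0$) reduces it to
$$ \sum_{i:\ t^{(i)}\in B^2_x}\Pr[(X'_{t_2})_i=0] \ \le\ \frac{2}{100}\sum_{i\notin t_2}\Pr[(X'_{t_2})_i=0]\,. $$
The numerator is at most the number of $i\notin t_2$ with $t^{(i)}\in B^2_x$, which by the argument inside the proof of Claim~\ref{claim:good}(3) is at most $|R|$; and $|R|$ is controlled by the relative-entropy budget $S(X'_{t_2}\|X_{t_2})\le 100\delta n$ — this is exactly where $t_2\notin B^1_x$ is used. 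The denominator equals $2l-\expct[\,|X'_{t_2}|\,]$; decomposing the same budget over coordinates (Fact~\ref{fact:relprop}(2)) and applying Fact~\ref{fact:relprop}(3) coordinatewise together with Cauchy--Schwarz bounds $\bigl|\expct[\,|X'_{t_2}|\,]-l\bigr|$ by $O(\sqrt{\delta}\,l)$, so the denominator is $\ge(1-O(\sqrt{\delta}))\,l$. With $\delta$ (and the thresholds defining $B^1_x$, $B^2_x$ and $R$) chosen small enough the displayed inequality holds.

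\textbf{Where the difficulty lies.} Parts~1 and~2 are routine once the pointwise bound $w(t)\le 4\Pr[A=B=1\mid T=t]$ and the rewriting via Claim~\ref{claim:good1}(2) are noted. The substance is in parts~3 and~4, where the single hypothesis $t_2\notin B^1_x$ must be made to do double duty: it has to bound from above how many coordinates can lie in $B^2_x$ (which is essentially already in Claim~\ref{claim:good}(3)) \emph{and} bound from below $\sum_{i\notin t_2}\Pr[(X'_{t_2})_i=0]$, the latter resting on the elementary observation that closeness of $X'_{t_2}$ to uniform in relative entropy forces $\expct[\,|X'_{t_2}|\,]$ to stay near its typical value $l$. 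The delicate point is to balance the unavoidably lossy numerator estimate against this denominator estimate, and it is this balancing that fixes the admissible value of $\delta$.
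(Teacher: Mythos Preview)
Your treatment of parts~1 and~2 matches the paper's exactly.

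For parts~3 and~4 your argument is correct but takes a longer route than the paper's. Both you and the paper use that $q\defeq\Pr[B=1\mid Y_i=0,T=t]$ and $\alpha\defeq\Pr[A=1\mid T=t]$ depend only on $t_2$. The paper then exploits one further structural fact you do not invoke: because $|X|=l$ is fixed, the distribution of $X$ given $T_2=t_2$ is unchanged by additionally conditioning on $X_I=0$, which gives
\[
\expct_{t\leftarrow T_{t_2}}\Pr[A=1\mid X_i=0,T=t]\;=\;\Pr[A=1\mid T_2=t_2]\;=\;\alpha\,.
\]
With this in hand the paper simply bounds $\Pr[A=1\mid X_i=0,T=t]\le 2\alpha$ on the left, pulls out the constant $2\alpha q$, applies Claim~\ref{claim:good}(3) (which says $\expct_{t\leftarrow T_{t_2}}[Bad^2_x]<1/100$), and then uses the displayed identity to reinsert $\alpha$ as $\expct_{t\leftarrow T_{t_2}}\Pr[A=1\mid X_i=0,T=t]$ on the right. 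No Bayes rewriting, no denominator estimate, and no need to look inside the proof of Claim~\ref{claim:good}(3) for the set $R$ --- its conclusion already suffices.

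Your route via $p^{(i)}_0=2\alpha\Pr[(X'_{t_2})_i=0]$ also works, but the Cauchy--Schwarz control on $\expct[\,|X'_{t_2}|\,]$ is superfluous: in the intended (Razborov) hard distribution $|X|=l$ always, so $\sum_{i\notin t_2}\Pr[(X'_{t_2})_i=0]=l$ identically, and your displayed inequality then follows with the stated constant $2/100$ directly from $|\{i:t^{(i)}\in B^2_x\}|<2l/100$, which is exactly the content of Claim~\ref{claim:good}(3). Your hedge ``with $\delta$ chosen small enough'' and the whole denominator discussion arise only because you are treating $|X|$ as variable.
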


\begin{proof} We show part 1. and part 2. follows similarly.  Note that for all $t$,
\begin{align*}
\Pr[A=1 | ~ T=t] & = \Pr[X_i=0|~T=t] \Pr[A=1 | ~  X_i = 0 , T=t]  \\
& \quad + \Pr[X_i=1|~T=t] \Pr[A=1 | ~  X_i = 1 , T=t]   .
\end{align*} 
Hence $ \Pr[A=1 | ~ T=t] \geq \frac{1}{2} \Pr[A=1 | ~  X_i = 0 , T=t] $.  Similarly $ \Pr[B=1 | ~ T=t] \geq \frac{1}{2} \Pr[B=1 | ~  Y_i = 0 , T=t] $. Consider,
\begin{align*}
\lefteqn{\expct_{t = (t_1,t_2,\{i\}) \leftarrow T}  \Pr[A=1 | ~  X_i = 0 , T=t]  \Pr[B=1 | ~  Y_i = 0 , T=t]  Bad^1_x } \\
& \leq  4 \expct_{t = (t_1,t_2,\{i\}) \leftarrow T}  \Pr[A=1 | ~   T=t] \Pr[B=1 | ~ T=t]  Bad^1_x  \\
& =  4 \expct_{t = (t_1,t_2,\{i\}) \leftarrow T}  \Pr[A= B=1 | ~ T=t]  Bad^1_x  \\
&  = 4 \Pr[A=B=1, T_2 \in B^1_x] \\
& \leq \frac{4}{100}   \Pr[A=B=1] \quad \mbox{(from Claim~\ref{claim:good})} \\
 & \leq \frac{8}{100}   \Pr[A=B=1 , \disj(XY) = 1]  \quad \mbox{(since $ \Pr[\disj(X'Y') = 1 ] \geq 0.5$)} \\
 & = \frac{6}{100} \expct_{t = (t_1,t_2,\{i\}) \leftarrow T}   \Pr[A=1 |~ T=t, X_i=0]  \Pr[B=1 |~ T=t, Y_i=0]  \quad  \mbox{(from Claim~\ref{claim:good1})}
\end{align*}

\vspace{0.1in}

We show part 3. and part 4. follows similarly. Note that :
\begin{enumerate}
\item $ \Pr[B=1 | ~  Y_i = 0 , T=(t_1,t_2,\{i\})] $ is independent of $i$ for fixed $t_2$. Let us call it $c(t_2)$.
\item $ \Pr[A=1 | ~  T=(t_1,t_2,\{i\})] $ is independent of $i$ for fixed $t_2$. Let us call it $r(t_2)$.
\item Distribution of $(X |~T_2=t_2)$ is identical to the distribution $(X|~T_2=t_2, X_I=0)$. Hence $  \expct_{t = (t_1,t_2,\{i\}) \leftarrow T_{t_2}}  \Pr[A=1 | ~   T=t]   = \expct_{t = (t_1,t_2,\{i\}) \leftarrow T_{t_2}}  \Pr[A=1 | ~ X_i=0,  T=t]  $.
\end{enumerate}
Fix $t_2 \notin B^1_x$. Consider,
\begin{align*}
\lefteqn{\expct_{t = (t_1,t_2,\{i\}) \leftarrow T_{t_2}}  \Pr[A=1 | ~  X_i = 0 , T=t]  \Pr[B=1 | ~  Y_i = 0 , T=t]  Bad^2_x } \\
& =  c(t_2) \expct_{t = (t_1,t_2,\{i\}) \leftarrow T_{t_2}}  \Pr[A=1 | ~  X_i = 0 , T=t] Bad^2_x  \\
&  \leq 2  c(t_2) \expct_{t = (t_1,t_2,\{i\}) \leftarrow T_{t_2}}  \Pr[A=1 | ~   T=t] Bad^2_x  \\
&  =   2 c(t_2) r(t_2) \expct_{t = (t_1,t_2,\{i\}) \leftarrow T_{t_2}}   Bad^2_x  \\
&  \leq   \frac{2}{100} c(t_2) r(t_2)  \quad \mbox{(from Claim~\ref{claim:good})}\\
&  =    \frac{2}{100} c(t_2) \expct_{t = (t_1,t_2,\{i\}) \leftarrow T_{t_2}}  \Pr[A=1 | ~   T=t]   \\
&  =    \frac{2}{100} c(t_2) \expct_{t = (t_1,t_2,\{i\}) \leftarrow T_{t_2}}  \Pr[A=1 | ~ X_i=0,  T=t]   \\
&  =    \frac{2}{100}  \expct_{t = (t_1,t_2,\{i\}) \leftarrow T_{t_2}}  \Pr[A=1 | ~ X_i=0,  T=t] \Pr[B=1 | ~  Y_i = 0 , T=t]   .
\end{align*}
\end{proof}

We can now finally prove our lemma. Let $Bad = 1$ iff any of $Bad^1_x, Bad^1_y, Bad^2_x,  Bad^2_y$ is $1$, otherwise $0$.
\begin{align*}
\lefteqn{\Pr[A=B=1, \disj(XY) = 0 ]} \\
& =\frac{1}{4} \expct_{t = (t_1,t_2,\{i\}) \leftarrow T}  \Pr[A=1| ~ T=t, X_i=1]  \Pr[B=1| ~ T=t, Y_i=1] \quad \mbox{(from Claim~\ref{claim:good1})}\\
& \geq  \frac{1}{4} \expct_{t = (t_1,t_2,\{i\}) \leftarrow T}  \Pr[A=1| ~ T=t, X_i=1]  \Pr[B=1| ~ T=t, Y_i=1] (1- Bad) \\
& \geq  \frac{1}{36} \expct_{t = (t_1,t_2,\{i\}) \leftarrow T}  \Pr[A=1| ~ T=t, X_i=0]  \Pr[B=1| ~ T=t, Y_i=0] (1- Bad)  \\
& \geq  \frac{84}{3600} \expct_{t = (t_1,t_2,\{i\}) \leftarrow T}  \Pr[A=1| ~ T=t, X_i=0]  \Pr[B=1| ~ T=t, Y_i=0]  \quad  \mbox{(from Claim~\ref{claim:good2})} \\
& = \frac{7}{225} \Pr[A= B=1, \disj(XY) =1]  \quad  \mbox{(from Claim~\ref{claim:good1})} .
\end{align*}
This implies 
\begin{align*}
\Pr[\disj(X'Y') = 0 ] & = \Pr[\disj(XY) = 0 |~ A=B=1] \\
& = \frac{\Pr[\disj(XY) = 0 , A=B=1]}{\Pr[A=B=1]} \\
& \geq  \frac{7}{225} \cdot \frac{\Pr[\disj(XY) = 1 , A=B=1]}{\Pr[A=B=1]} \\
& =  \frac{7}{225} \cdot \Pr[\disj(X'Y') = 1]  \geq \frac{1}{70} .
\end{align*}
\end{proofof}

\bibliographystyle{alpha}

\bibliography{sdptwo}

\end{document}